\documentclass[twocolumn]{autart}    

\usepackage{color}
\usepackage{amsmath,amssymb,amsfonts}
\usepackage[bb=boondox]{mathalfa}
\usepackage{stmaryrd}
\usepackage{graphicx}          

\allowdisplaybreaks[4]

\newcommand{\0}{\mathbb{0}}
\newcommand{\I}{\mathbb{I}}
\newcommand{\1} {\mathbb{1}}

\newcommand{\bR} { {\mathbb R}}
\newcommand{\bZ} { {\mathbb Z}}

\newcommand{\id}{\mathsf{id}}

\usepackage{enumitem}
\setlist[enumerate]{itemsep=0mm,topsep=5pt}
\renewcommand{\theenumi}{(\roman{enumi})}

\newcommand{\lp}[1] { \mathsf{Lip}\left(#1\right) }
\newcommand{\olp}[1] { \mathsf{osLip} \left(#1\right) }

\newcommand{\lpP}[1] { \mathsf{Lip}_{2,P^{1/2}}(#1) }
\newcommand{\lpi}[1] { \mathsf{Lip}_i(#1) }
\newcommand{\olpP}[1] { \mathsf{osLip}_{2,P^{1/2}}(#1) }
\newcommand{\olpDP}[1] { \mathsf{osLip}_{2,([d] \otimes P)^{1/2}}(#1) }

\newcommand{\lpeta}[1] { \mathsf{Lip}_{1,[\eta]}(#1) }
\newcommand{\olpeta}[1] { \mathsf{osLip}_{1,[\eta]}(#1) }
\newcommand{\lpetaa}[1] { \mathsf{Lip}_{1,[\1_s \otimes \eta]}(#1) }
\newcommand{\olpetaa}[1] { \mathsf{osLip}_{1,[\1_s \otimes \eta]}(#1) }

\newcommand{\lpetai}[1] { \mathsf{Lip}_{\infty,[\eta]^{-1}}(#1) }
\newcommand{\olpetai}[1] { \mathsf{osLip}_{\infty,[\eta]^{-1}}(#1) }

\newcommand{\llangle}{\langle\!\langle}
\newcommand{\rrangle}{\rangle\!\rangle}

\usepackage{xcolor}

\definecolor{fblightblue}{RGB}{220,235,250}

\definecolor{yklightred}{RGB}{250,235,220}

\def\red{\hfill $\lhd$}

\definecolor{gnred}{RGB}{255,91,89}

\definecolor{gnred1}{RGB}{71,0,0} 
\definecolor{gnred2}{RGB}{117,0,0} 
\definecolor{gnred3}{RGB}{164,0,0} 
\definecolor{gnred4}{RGB}{211,0,0} 
\definecolor{gnred5}{RGB}{255,0,0} 
\definecolor{gnred6}{RGB}{255,42,34} 
\definecolor{gnred7}{RGB}{255,91,89} 

\definecolor{gnblue1}{RGB}{0,36,71}   
\definecolor{gnblue2}{RGB}{0,60,118}  
\definecolor{gnblue3}{RGB}{0,85,164}  
\definecolor{gnblue4}{RGB}{0,108,212} 
\definecolor{gnblue5}{RGB}{0,133,255}  
\definecolor{gnblue6}{RGB}{35,156,255} 
\definecolor{gnblue7}{RGB}{88,177,255} 

\definecolor{gnbrown1}{RGB}{71,27,0}  
\definecolor{gnbrown2}{RGB}{117,45,0} 
\definecolor{gnbrown3}{RGB}{164,62,0} 
\definecolor{gnbrown4}{RGB}{211,80,0} 
\definecolor{gnbrown5}{RGB}{255,97,0} 
\definecolor{gnbrown6}{RGB}{255,127,26} 
\definecolor{gnbrown7}{RGB}{255,155,86} 

\begin{document}

\begin{frontmatter}

\title{Contractivity of Multi-Stage Runge--Kutta Dynamics\thanksref{footnoteinfo}} 

\thanks[footnoteinfo]{
This paper was not presented at any IFAC 
meeting. Corresponding author Y.~Kawano. Tel. +81-82-424-7532. 
Fax +81-82-424-7193.}

\author[JP]{Yu Kawano}\ead{ykawano@hiroshima-u.ac.jp} \quad    
\author[USA]{Francesco Bullo}\ead{bullo@ucsb.edu}               

\address[JP]{Graduate School of Advanced Science and Engineering, Hiroshima University, Higashi-Hiroshima 739-8527, Japan}  
\address[USA]{Department of Mechanical Engineering and the Center for Control, Dynamical Systems, and Computation, University of California, Santa Barbara, USA}             

\begin{keyword}                           
Contraction Theory; Runge--Kutta Methods.  
\end{keyword}                             

\begin{abstract}                          
  Many control, optimization, and learning algorithms rely on
  discretizations of continuous-time contracting systems, where
  preservation of contractivity under numerical integration is key
  for stability, robustness, and reliable fixed-point computation.
  In this paper, we establish conditions under which multi-stage
  Runge–Kutta methods preserve \emph{strong contractivity} when
  discretizing infinitesimally contractive continuous-time systems. For
  explicit Runge–Kutta methods, preservation conditions are derived by
  bounding Lipschitz constants of the associated composite stage mappings,
  leading to coefficient-dependent criteria. For implicit methods, the
  algebraic structure of the stage equations enables explicit conditions on
  the Runge–Kutta coefficients that guarantee preservation of strong
  contractivity. In the implicit case, these results extend classical
  guarantees, typically limited to weak contractivity in the Euclidean
  metric, to strong contractivity with respect to the $\ell_1$-, $\ell_2$-,
  and $\ell_\infty$-norms. In addition, we study well-definedness of
  implicit methods through an auxiliary continuous-time system associated
  with the stage equations. We show that strong infinitesimal contractivity
  of this auxiliary system is sufficient to guarantee unique solvability of
  the stage equations. This analysis generalizes standard well-definedness
  conditions and provides a dynamic implementation approach that avoids
  direct solution of the implicit algebraic equations.
\end{abstract}

\end{frontmatter}

\section{Introduction}

\emph{Problem Description and Motivation:} Numerical integration of
continuous-time dynamical systems is a central component of modern control,
optimization, and learning. Beyond accuracy, discretization schemes are
often required to preserve qualitative properties of the underlying
differential equation. In particular, infinitesimal contractivity implies
incremental stability, robustness, and exponential convergence, properties
that are fundamental in feedback design, observer synthesis, and
large-scale interconnected systems. When a contracting continuous-time
system is discretized, a natural question arises: under what conditions do
multi-stage Runge–Kutta methods preserve strong infinitesimal
contractivity?

This question is not only of theoretical interest. Numerical integration
routines are increasingly embedded in control-oriented computations,
including numerical construction of Lyapunov functions~\cite{PG-SH:24} and
contraction metrics~\cite{PG-SH-IM:23}, as well as in neural
ODEs~\cite{NG-ADM-AS-FT:25}, implicit
models~\cite{MR-RW-IRM:24,CLG-LF-GFT:25}, and inference for predictable
dynamics~\cite{XG-LK-DMZ-KLC-SWL:25}. In these contexts, reliable
fixed-point computation and stability of discrete-time dynamics rely on
preservation of strong contractivity under discretization. The objective of
this paper is to revisit the analysis of multi-stage Runge–Kutta methods
adopting methods and results from modern contraction theory, focusing on
(i) well-posedness and computational realization of the induced
discrete-time dynamics, and (ii) preservation of strong infinitesimal
contractivity.

\emph{Literature Review:} The classical notion of B-stability, introduced
in~\cite{BJC-75}, formalizes preservation of weak infinitesimal
contractivity, or non-expansiveness, under time discretization. The
analysis relies on one-sided Lipschitz constants~\cite{GD:58}, which later
became central in the study of Runge–Kutta methods and their well-posedness
properties; these foundational developments are documented in the
influential monographs~\cite{HE-NSP-WG:93,HE-WG:96}. Related notions, such
as asymptotic B-stability~\cite{HE-ZM:96}, address asymptotic
convergence. Classical results on well-posedness provide sufficient
conditions in terms of Lipschitz constants with respect to arbitrary
norms~\cite{HE-NSP-WG:93} or one-sided Lipschitz constants with respect to
the $\ell_2$-norm~\cite{HE-WG:96}. However, the focus is primarily on weak
contractivity with respect to the $\ell_2$-norm.

In parallel, contraction theory was introduced to the systems and control
community in~\cite{LW-SJJE:98}, providing a differential framework for
analyzing exponential convergence of nonlinear systems. Although
contraction theory and numerical integration share common tools and
concepts, including one-sided Lipschitz constants and infinitesimal
contractivity, a direct theoretical connection between contraction theory
and the analysis of general multi-stage Runge–Kutta methods has not been
systematically developed.

More recently, using terminology from monotone operator theory,
preservation of strong infinitesimal contractivity has been established for
the forward Euler and implicit Euler methods~\cite{AD-SJ-AVP-FB:24}, which
are special cases of multi-stage Runge–Kutta schemes. These results rely on
standard and one-sided Lipschitz constants with respect to arbitrary
norms. To the best of our knowledge, preservation of strong infinitesimal
contractivity for general explicit and implicit multi-stage Runge–Kutta
methods has remained largely unexplored.

\emph{Contributions:} The contributions of this paper address both
well-posedness and preservation of strong infinitesimal contractivity for
multi-stage Runge–Kutta methods, and contain all elements outlined in the
original formulation.

First, we revisit the classical problem of well-posedness. When applied to
continuous-time systems, multi-stage Runge–Kutta methods generate implicit
discrete-time systems, referred to as multi-stage Runge–Kutta dynamics. We
introduce an auxiliary dynamics associated with a given Runge–Kutta scheme
and show that strong infinitesimal contractivity of this auxiliary system,
with respect to an arbitrary norm, guarantees that the corresponding
multi-stage Runge–Kutta dynamics admit a unique explicit form. The
resulting condition is simple and contains the classical well-posedness
conditions based on Lipschitz constants and one-sided Lipschitz
constants~\cite{HE-NSP-WG:93,HE-WG:96} as special cases. Moreover, the
auxiliary dynamics provides a systematic framework for implementing
multi-stage Runge–Kutta methods. In particular, when the auxiliary system
is strongly infinitesimally contractive and has a finite Lipschitz
constant, its forward Euler discretization with sufficiently small step
size yields a practical implementation of the multi-stage Runge–Kutta
method. This construction relies on the fact that forward Euler with
sufficiently small step size preserves strong infinitesimal
contractivity~\cite{AD-SJ-AVP-FB:24}.

Second, we study preservation of strong infinitesimal contractivity under
discretization. With respect to the $\ell_2$-norm, we show that the
classical B-stability condition~\cite{HE-WG:96}, together with the
assumption that the vector field has a finite Lipschitz constant,
guarantees that multi-stage Runge–Kutta methods preserve strong
infinitesimal contractivity, that is, exponential convergence. Thus,
classical B-stability, originally formulated for weak contractivity,
combined with a bounded Lipschitz condition, ensures preservation of strong
contractivity in the $\ell_2$ setting.

Third, with respect to the $\ell_1$- and $\ell_\infty$-norms, we derive
novel sufficient conditions under which multi-stage Runge–Kutta methods
preserve strong infinitesimal contractivity. These results extend
contractivity-preserving guarantees beyond the $\ell_2$ framework of
classical B-stability and establish new norm-dependent criteria for
exponential convergence of the discrete-time dynamics.

{\it Organization:} The remainder of this paper is organized as follows. In Section~\ref{sec:pre}, we present the necessary mathematical preliminaries. In Section~\ref{sec:erk}, we estimate Lipschitz constants of explicit multi-stage Runge–Kutta methods. In Section~\ref{sec:imp}, we study the well-definedness and implementation of multi-stage Runge–Kutta methods through the strong infinitesimal contractivity of the corresponding auxiliary systems. Also, we provide conditions under which a multi-stage Runge–Kutta method preserves the strong infinitesimal contractivity of a continuous-time system. Section~\ref{sec:con} concludes this paper.

\emph{Notation:}
The ring of integers, field of real numbers, and set of positive real numbers are denoted by~$\bZ$,~$\bR$, and~$\bR_{>0}$, respectively. 
The identity mapping on~$\bR^n$ is denoted by~$\id_n$.
The~$n$-component vector and~$n \times m$-matrix whose all components are~$0$ are denoted by~$\0_n$ and~$\0_{n \times m}$, respectively. The~$n$-component vector whose all components are~$1$ is denoted by~$\1_n$. The~$n \times n$ identity matrix is denoted by~$\I_n$. For~$\eta \in \bR^n$, the diagonal matrix whose diagonal entries are equal to~$\eta$ is denoted by~$[\eta] \in \bR^{n \times n}$. The matrix obtained by taking the element-wise absolute value of~$A \in \bR^{n \times m}$ is denoted by~$|A|$. 
The Kronecker product of~$A \in \bR^{n \times m}$ and~$B \in \bR^{q \times r}$ is denoted by~$A \otimes B$ whose size is~$nq \times mr$. For~$v,w \in \bR^n$, relation~$v \ge w$ means~$v_i \ge w_i$ for all~$i=1,\dots,n$. For~$P,Q \in \bR^{n \times n}$, relation~$P \succ Q$ (resp.~$P \succeq Q$) means that~$P-Q$ is symmetric and positive (resp. semi) definite.
Given~$P \succ \0_{n \times n}$ and~$\eta \in \bR_{>0}^n$, the weighed~$\ell_2$,~$\ell_1$-, and~$\ell_\infty$-norms are denoted by~$\| x\|_{2,P^{1/2}} := \sqrt{x^\top P x}$, $\|x\|_{1,[\eta]} := \eta^\top |x|$, and~$\|x\|_{\infty,[\eta]^{-1}} := \max_{i \in \{1,\dots,n\}} |x_i|/\eta_i$, respectively.
If~$P = \I_n$ and~$\eta = \1_n$, they are simply denoted by~$\| \cdot \|_2$,~$\| \cdot \|_1$, and~$\| \cdot \|_\infty$, respectively. Given vector norm~$\| \cdot \|$, a compatible weak pairing~\cite[Definition 2.27]{FB:26} is denote by~$\llbracket \cdot ; \cdot \rrbracket$. A compatible weak pairing of~$\| \cdot \|_{2,P^{1/2}}$ is the inner product~$\llbracket x ; y \rrbracket_{2,P^{1/2}} = \llangle x, y\rrangle_{2,P^{1/2}} := y^\top P x$.
Compatible weak pairings of~${\| \cdot \|_{1,[\eta]}}$ and~$\| \cdot \|_{\infty,[\eta]^{-1}}$ are~$\llbracket x ; y \rrbracket_{1,[\eta]} = \| y\|_{1,[\eta]} {\rm sign}(y)^\top [\eta] x$ and~$\llbracket x ; y \rrbracket_{\infty,[\eta]^{-1}} = \max_{i \in I_\infty ([\eta]^{-1}y)} \eta_i^{-2} y_i x_i$,
where~$I_\infty (x) := \{ i \in \{1,\dots,n\} : |x_i| = \|x\|_\infty\}$. Given vector norm~$\| \cdot \|$, the corresponding induced matrix norm and induced log norm of~$A \in \bR^{n \times n}$ are denoted by~$\| A \| := \max_{x\in\bR^n, \|x\|=1} \|A x\|$ and~$\mu (A) := \lim_{h \to 0^+} \frac{\| \I_n + h A\|}{h}$, respectively.
Also, the corresponding smallest Lipshitz constant and smallest one-sided Lipshitz constant of~$f:\bR \times \bR^n \to \bR^n$ with respect to~$x$ are respectively denoted by
\begin{align*}
    \lp{f} 
    &:= \sup_{\substack{x \neq y \\ (t,x,y) \in \bR \times \bR^n \times \bR^n}} 
    \frac{\|f(t,x) - f(t,y)\|}{\|x-y\|}, \\
    \olp{f} 
    &:= \sup_{\substack{x \neq y \\ (t,x,y) \in \bR \times \bR^n \times \bR^n}} 
    \frac{\llbracket f(t,x) - f(t,y) ; x-y \rrbracket}{\|x-y\|^2}.
\end{align*}



\section{Problem Setup}\label{sec:pre}
Consider a continuous-time nonlinear system:
\begin{align}\label{eq:sys}
    \dot x(t) = f(t, x(t)),
\end{align}
where~$f:\bR \times \bR^n \to \bR^n$ is continuous. 

We recall~\cite[Definitions 3.8 and 4.2]{FB:26} of contractivity.

\begin{defn}
\textbf{\textup{(Contractivity in Continuous-Time)}}
    Let~$\| \cdot \|$ be a norm on~$\bR^n$ with a compatible weak pairing, and let~$f:\bR \times \bR^n \to \bR^n$ be continuous. 
    The continuous-time system~\eqref{eq:sys} is
    \begin{enumerate}[nosep]
    \item \emph{weakly infinitesimally contracting} with respect to~${\|
      \cdot \|}$ if~$\olp{f} \le 0$, and
    \item \emph{strongly infinitesimally contracting} with respect to~$\|
      \cdot \|$ if there exists~$\lambda \in \bR_{>0}$ such that~$\olp{f}
      \le - \lambda$, where~$\lambda$ is called a \emph{contractivity
      rate}.~\red
    \end{enumerate}
\end{defn}

We note that weak infinitesimal contractivity implies the uniqueness of the solution~$x(t)$ to the continuous-time system~\eqref{eq:sys} for any finite~$t \ge t_0$ under any initial condition~$(t_0, x(t_0)) \in \bR \times \bR^n$~\cite[Remark 3.6]{FB:26}. Moreover, strong infinitesimal contractivity implies that an equilibrium of the system is globally exponentially stable ~\cite[Theorem 3.9]{FB:26}.

Applying an~$s$-stage Runge--Kutta method, e.g.,~\cite[eq. (12.3)]{HE-WG:96} to the continuous-time system~\eqref{eq:sys} yields the following discrete-time system:
\begin{subequations}\label{eq:dsys}
    \begin{align}
        x_{k+1} &= x_k + h (b \otimes \I_n)^\top F_c (t_k, y), \quad k \in \bZ,
        \label{eq1:dsys}\\
        y &= (\1_s \otimes x_k) + h (A \otimes \I_n) F_c (t_k, y),
        \label{eq2:dsys} \\
        &F_c (t, y) 
        :=
        \begin{bmatrix}
        f (t + c_1, y_1 ) \\ \vdots \\ f (t + c_s, y_s )
        \end{bmatrix}
        \label{eq3:dsys}
    \end{align}
\end{subequations}
with~$(t_k, x_k) \in \bR \times \bR^n$, where~$h \in \bR_{>0}$ denotes the
step size, and~$A \in \bR^{s \times s}$ and~$b, c \in \bR^s$, and~$y :=
[\begin{matrix} y_1^\top & \cdots & y_s^\top \end{matrix}]^\top \in
\bR^{sn}$ denotes the vector of stage values.  Equation~\eqref{eq1:dsys} is
called the \emph{update equation}, while \eqref{eq2:dsys}–\eqref{eq3:dsys}
are called the \emph{stage equations}.  We refer to~\eqref{eq:dsys} as the
\emph{$s$-stage Runge–Kutta dynamics}.  If~$a_{i,j} = 0$ for all $i \le j$,
then the $s$-stage Runge–Kutta method/dynamics is called \emph{explicit};
otherwise, it is called \emph{implicit}.

Our main interest in this paper is to study when a multi-stage Runge--Kutta method preserves the strong infinitesimal contractivity of a continuous-time system. To this end, we recall~\cite[Definition 1.5]{FB:26} of contractivity in discrete-time. 

\begin{defn}
  \textbf{\textup{(Contractivity in Discrete-Time)}} Let~$\| \cdot \|$ be a
  norm on~$\bR^n$, and let~$g:\bR \times \bR^n \to \bR^n$.  The
  discrete-time system:~$x_{k+1} = g (t_k, x_k)$,~$k \in \bZ$ is
  \begin{enumerate}
  \item \emph{weakly contracting} with respect to~$\| \cdot \|$
    if~$\lp{g} \le 0$, and
  \item \emph{strongly contracting} with respect to~$\| \cdot \|$ if there
    exists~$\rho \in [0,1)$ such that~$\lp{f} \le \rho$, where~$\rho$ is
      called a \emph{contractivity factor}.~\red
  \end{enumerate}
\end{defn}

\section{Explicit Runge-Kutta Methods}\label{sec:erk}

\subsection{Lipschitz Constant Estimation}
In the explicit case, we can estimate a Lipschitz constant of an~$s$-stage Runge--Kutta method as follows.

\begin{thm}\label{thm:ex}
\textbf{\textup{(Lipschitz Constant: Explicit Case)}}
    Let~$\| \cdot \|$ be a norm on~$\bR^n$. Let~$f:\bR \times \bR^n \to \bR^n$ be continuous, and let~$A \in \bR^{s \times s}$ be lower triangular with zeros on the diagonal. Define~$d_0 := \sum_{i=1}^s b_i$ and~$d_i := \sum_{j<i} a_{i,j}$. Then, a Lipschitz constant~$\rho \in \bR_{>0}$ of the explicit $s$-stage Runge--Kutta dynamics~\eqref{eq:dsys} with respect to norm~$\| \cdot \|$ is given by
    \begin{subequations}\label{eq:ex}
        \begin{align}\label{eq1:ex}
        \rho
        &:= \sum_{i=1}^s \left|\frac{b_i}{d_0}\right| 
        \lp{\id + h d_0 f} \rho_i,
        \end{align}
        where~$\rho_i$ is defined recursively by~$\rho_1 = 1$ and
        \begin{align}\label{eq2:ex}
        \rho_i
        &:= \sum_{j<i} \left|\frac{a_{i,j}}{d_i}\right| \lp{\id + h d_i f} \rho_j
        \nonumber\\
        &\quad + \lp{f} \sum_{j<i} \sum_{l<j} \left|\frac{ha_{i,j}a_{j,l}}{d_i}\right|  \rho_l,
        \quad i > 1.
        \end{align}        
    \end{subequations}
\end{thm}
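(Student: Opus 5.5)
We take two initial points $x_k, \bar x_k$ with the same $t_k$, let $y, \bar y$ be the stage vectors produced by the explicit stage equations~\eqref{eq2:dsys}, and prove by strong induction on $i$ that
\[
\|y_i - \bar y_i\| \le \rho_i \|x_k - \bar x_k\|, \qquad i=1,\dots,s .
\]
The same argument applied to the update equation~\eqref{eq1:dsys} then gives $\|x_{k+1}-\bar x_{k+1}\| \le \rho\|x_k-\bar x_k\|$. Since $A$ is strictly lower triangular, $y_i$ depends only on $y_1,\dots,y_{i-1}$, so the recursion is well defined and explicit. The base case is immediate: $y_1 = x_k$, so $\rho_1=1$.

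The key algebraic step is to rewrite each stage so that the map $\id + h d_i f$ appears. Assume $d_i \neq 0$ (the case $d_i=0$ would need separate handling or a convention). For $i>1$ write
\[
y_i = \sum_{j<i} \frac{a_{i,j}}{d_i}\bigl(x_k + h d_i f(t_k+c_j, y_j)\bigr).
\]
This holds because $\sum_{j<i} a_{i,j}/d_i = 1$. Inside the bracket, substitute $x_k = y_j - h\sum_{l<j} a_{j,l} f(t_k+c_l, y_l)$, obtained from the $j$-th stage equation. This gives
\[
x_k + h d_i f(\cdot,y_j) = (\id + h d_i f(t_k+c_j,\cdot))(y_j) - h\sum_{l<j} a_{j,l} f(t_k+c_l, y_l).
\]

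Next we subtract the corresponding expression for $\bar y_i$ and apply the triangle inequality. The first term contributes at most $\bigl|\frac{a_{i,j}}{d_i}\bigr|\lp{\id+hd_if}\,\|y_j-\bar y_j\|$. The second term contributes at most $\bigl|\frac{h a_{i,j} a_{j,l}}{d_i}\bigr|\lp{f}\,\|y_l-\bar y_l\|$. Here $\lp{\cdot}$ is taken uniformly in $t$, so the time shifts $c_j$ are harmless. Inserting the induction hypothesis for $j<i$ and $l<j$ yields exactly~\eqref{eq2:ex}.

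For the final step we write $x_{k+1} = \sum_i \frac{b_i}{d_0}(x_k + h d_0 f(t_k+c_i, y_i))$. Here we must decide whether to substitute $x_k$ in terms of $y_i$ again. Formula~\eqref{eq1:ex} has no extra $\lp{f}$ correction term, so the $\rho_i$ there should multiply $\lp{\id+hd_0f}$ directly. This would be consistent with the bound $\|(\id+hd_0f)(y_i)-(\id+hd_0f)(\bar y_i)\|$ only if $x_{k+1}$ were written as $\sum_i \frac{b_i}{d_0}(\id + h d_0 f)(y_i)$ plus correction terms.

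The main obstacle is reconciling this last step with~\eqref{eq1:ex}. The natural identity has $x_k = y_i - h\sum_{l<i}a_{i,l}f(\cdot, y_l)$, which produces correction terms of the form $\lp{f}\sum_i\sum_{l<i}|hb_ia_{i,l}/d_0|\rho_l$. The stated formula omits these. I will first try to verify whether the intended statement absorbs such terms, for instance by bounding $\|x_k-\bar x_k\|$ directly rather than through $y_i$. Alternatively, the update might be regrouped as $(\id+hd_0f)$ applied at a point whose Lipschitz dependence is already captured by $\rho_i$. If neither works, the corrected constant includes the analogous double-sum term, mirroring~\eqref{eq2:ex}.
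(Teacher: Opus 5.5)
Your induction for the stage bounds $\|y_i-\bar y_i\|\le\rho_i\|x_k-\bar x_k\|$ is the paper's own argument; it is exactly the rewriting \eqref{pf1-3:ex}. Your caveat that $d_0$ and $d_i$, $i>1$, must be nonzero is a hypothesis the statement silently uses. The gap is the last step, and it cannot be closed in the direction you hope: \eqref{eq1:ex} is not a valid Lipschitz constant.

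The paper's proof has exactly the hole you found. It just asserts that \eqref{pf1-1:ex} has Lipschitz constant \eqref{eq1:ex}. In other words, it treats $x_k+hd_0f(t+c_i,y_i)$ as $(\id+hd_0f)(y_i)$, which is valid only when $y_i=x_k$, e.g.\ for $i=1$.

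A counterexample:
\begin{itemize}
\item Take Heun's method with $s=2$: $a_{2,1}=1$ and $b_1=b_2=1/2$, so $d_0=d_2=1$.
\item Apply it to the scalar system $f(t,x)=-\lambda x$ with the norm $|\cdot|$, and set $z:=h\lambda\in(0,1)$.
\item Then $\lp{\id+hf}=1-z$ and $\rho_2=1-z$, so \eqref{eq1:ex} gives $\rho=\tfrac12(1-z)+\tfrac12(1-z)^2$.
\item The exact update is $x_{k+1}=(1-z+z^2/2)x_k$.
\item At $z=1/2$, \eqref{eq1:ex} gives $\rho=3/8$, while the true Lipschitz constant is $5/8$.
\end{itemize}

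So drop both rescue attempts and commit to the constant you named last. Substituting $x_k=y_i-h\sum_{l<i}a_{i,l}f(t_k+c_l,y_l)$ into the update gives
\begin{align*}
\rho &= \sum_{i=1}^s\left|\frac{b_i}{d_0}\right|\lp{\id+hd_0f}\rho_i \\
&\quad + \lp{f}\sum_{i=1}^s\sum_{l<i}\left|\frac{hb_ia_{i,l}}{d_0}\right|\rho_l .
\end{align*}
This is just \eqref{eq2:ex} applied to a fictitious stage $s+1$ with $a_{s+1,j}=b_j$ and $d_{s+1}=d_0$. It agrees with \eqref{eq1:ex} only when all products $b_ia_{i,l}$ vanish, as for forward Euler. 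In the Heun example the extra term equals $z/2$, and the bound becomes exactly $1-z+z^2/2$.

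A complete treatment therefore consists of this counterexample plus the corrected bound, which your induction already proves. As a proof of the statement as written, your proposal is necessarily unfinished, because that statement is false.
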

\begin{pf}
    We rewrite~\eqref{eq1:dsys} and~\eqref{eq2:dsys} as
    \begin{subequations}
        \begin{align}\label{pf1-1:ex}
            x_{k+1} = \sum_{i=1}^s \frac{b_i}{d_0} (x_k + h d_0 f(t + c_i, y_i)),
        \end{align}
        and
        \begin{align}\label{pf1-2:ex}
            y_i =  x_k + h \sum_{j<i} a_{i,j} f(t+c_j, y_j).
        \end{align}    
    \end{subequations}
    We further rearrange~\eqref{pf1-2:ex}. Using~\eqref{pf1-2:ex} and~$d_i :=
    \sum_{j<i} a_{i,j}$, compute
    \begin{align*}
        &x_k + h d_i f(t+c_j, y_j)
        \nonumber\\
        &\quad = y_j + h d_i f(t+c_j, y_j) + (x_k - y_j)
        \nonumber\\
        &\quad = y_j + h d_i f(t+c_j, y_j) - h \sum_{l<j} a_{j,l} f(t+c_l, y_l),
    \end{align*}
    Then,~\eqref{pf1-2:ex} can be rewritten as
    \begin{align}\label{pf1-3:ex}
        y_i 
        &= \sum_{j<i} \frac{a_{i,j}}{d_i} (x_k + h d_i f(t+c_j, y_j))
        \nonumber\\
        &= \sum_{j<i} \frac{a_{i,j}}{d_i} (y_j + h d_i f(t+c_j, y_j))
        \nonumber\\
        &\quad -  \sum_{j<i} \sum_{l<j} \frac{ha_{i,j}a_{j,l}}{d_i} f(t+c_l, y_l).
\end{align}
Lipschitz constants of~\eqref{pf1-1:ex} and~\eqref{pf1-3:ex} are given by~\eqref{eq1:ex} and~\eqref{eq2:ex}, respectively.~\qed
\end{pf}

To apply Theorem~\ref{thm:ex}, we need to estimate a Lipschitz constant of the map~$\id + h d_i f$. Since this corresponds to the forward Euler discretization, we can find several estimations under strong infinitesimal contractivity~$\olp{f} \le -\lambda$, $\lambda \in \bR_{>0}$ and Lipschitzness~$\lp{f} \le \ell$, $\ell \in \bR_{>0}$, including~\cite[Lemma 19]{AD-SJ-AVP-FB:24} for a general norm. Also, the standard proof in~\cite[Section 6]{EKR-SB:16} for monotone operator theory can be applied in the Euclidean case:
\begin{align}\label{eq:lp_l2}
    \lp{\id + h d_i f} 
    \le \sqrt{1-2h d_i \lambda + (h d_i \ell)^2}.
\end{align}
Note that~$1-2h d_i \lambda + (h d_i \ell)^2 < 1$ if and only if $0<h d_i<2\lambda/\ell^2$.


\subsection{Examples}

We apply Theorem~\ref{thm:ex} with the Lipschitz estimate~\eqref{eq:lp_l2}
to explicit Runge-Kutta methods with the number of stages from~$s=1$
to~$s=5$. The coefficients are as follows:
\begin{itemize}
    \item Forward Euler ($s=1$): $A = 0$ and~$b=1$;

    \item Heun’s method ($s=2$):
        \begin{align*}
            A = \begin{bmatrix}0 & 0 \\ 1 & 0 \end{bmatrix}, 
            \quad
            b = \begin{bmatrix}1/2 \\ 1/2\end{bmatrix};
        \end{align*}

    \item Heun’s method ($s=3$):
        \begin{align*}
            A = \begin{bmatrix}
            0 & 0 & 0 \\ 
            1/3 & 0 & 0 \\ 
            0 & 2/3 & 0 
            \end{bmatrix}, 
            \quad
            b = \begin{bmatrix}1/4 \\ 0 \\ 3/4\end{bmatrix};
        \end{align*}

    \item Classical Runge–Kutta method ($s=4$):
        \begin{align*}
            A = \begin{bmatrix}
            0 & 0 & 0 & 0 \\ 
            1/2 & 0 & 0 & 0 \\ 
            0 & 1/2 & 0 & 0 \\ 
            0 & 0 & 1 & 0 
            \end{bmatrix}, 
            \quad
            b = \begin{bmatrix}
            1/6 \\ 1/3 \\ 1/3 \\ 1/6
            \end{bmatrix};
        \end{align*}

    \item Strong stability preserving Runge–Kutta method ($s=5$):
        \begin{align*}
            A = \begin{bmatrix}
            0 & 0 & 0 & 0 & 0 \\ 
            1/4 & 0 & 0 & 0 & 0 \\ 
            1/8 & 1/8 & 0 & 0 & 0 \\ 
            0 & 0 & 1/2 & 0 & 0 \\
            3/16 & -3/8 & 3/8 & 9/16 & 0 
            \end{bmatrix}, 
            \quad
            b = \begin{bmatrix}
            1/6 \\ 0 \\ 2/3 \\ 1/6 \\ 0
            \end{bmatrix}.
        \end{align*}
\end{itemize}
Figures~\ref{fig1:ERK} and~\ref{fig2:ERK} show the Lipschitz
constants~$\rho$ in~\eqref{eq1:ex} for contractivity rate~$\lambda=1,2$ and Lipschitz
constant~$\ell = 2$. In these two cases, each method preserves
contractivity for sufficiently small step size~$h \in \bR_{>0}$.

\begin{figure}[t]
    \includegraphics[width=1\linewidth]{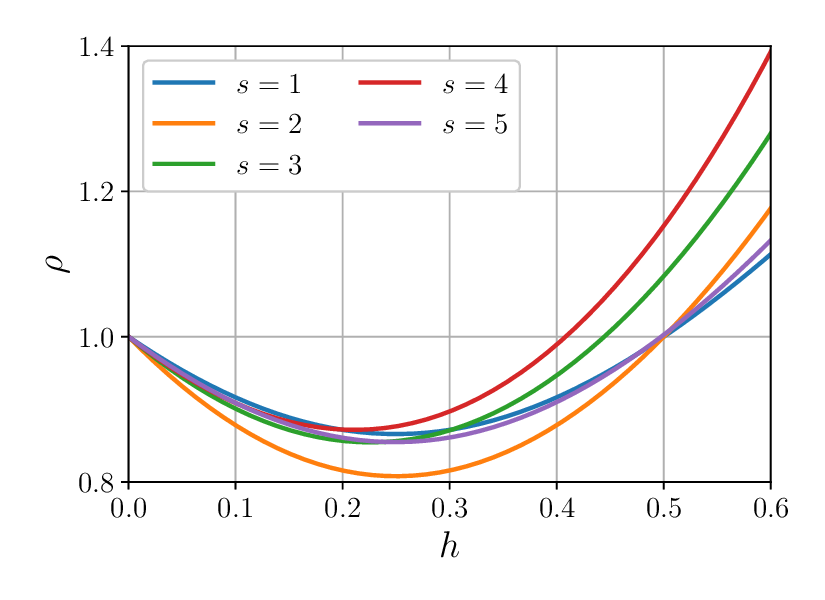}
    \caption{Explicit Runge-Kutta methods: Lipschitz constant~$\rho$ in~\eqref{eq1:ex} for~$\lambda = 1$ and~$\ell = 2$}
    \label{fig1:ERK}

    \includegraphics[width=1\linewidth]{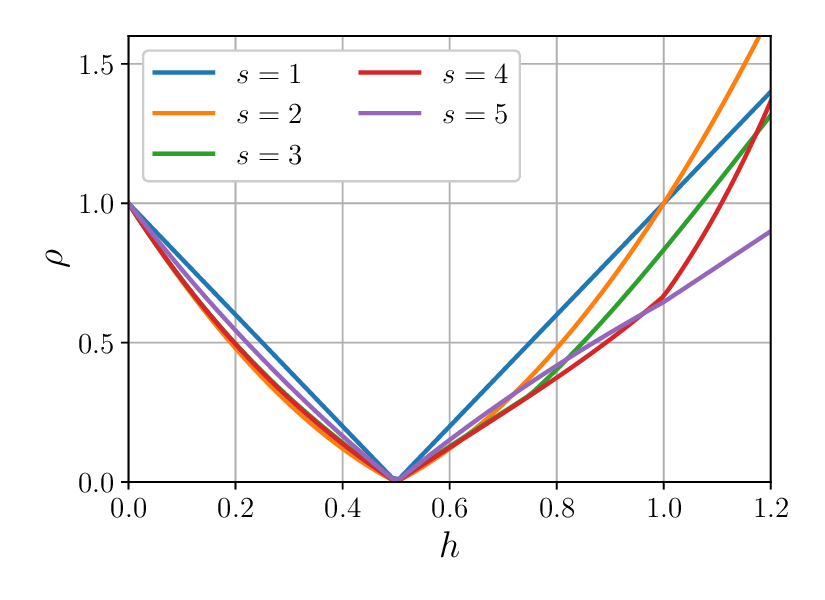}
    \caption{Explicit Runge-Kutta methods: Lipschitz constant~$\rho$ in~\eqref{eq1:ex} for~$\lambda = 2$ and~$\ell = 2$}
    \label{fig2:ERK}
\end{figure}



\section{Implicit Runge–Kutta Methods}\label{sec:imp}
In this section, we study contractivity preservation under implicit Runge–Kutta methods. In contrast to the explicit case, we also need to discuss the well definedness of implicit methods, i.e., the solvability of the implicit stage equation~\eqref{eq2:dsys}. 

\subsection{Well-Definedness}\label{sec:wd}
An~$s$-stage Runge--Kutta dynamics is said to be \emph{well-defined} if the implicit stage equation~\eqref{eq2:dsys} has a unique solution~$y^*=g(t_k,x_k)$ for each~$(t_k, x_k) \in \bR \times \bR^n$. A well-defined~$s$-stage Runge--Kutta dynamics admits a unique explicit form: 
\begin{align}\label{eq:esys}
    x_{k+1} = g (t_k, x_k), \quad k \in \bZ.
\end{align}
In the explicit case, an $s$-stage Runge--Kutta dynamics is always well-defined. In the implicit case, the well-definedness is a classical and extensively studied question; see, e.g.,~\cite[Theorem II.7.2]{HE-NSP-WG:93} and~\cite[Theorems IV.14.2 and IV.14.4]{HE-WG:96}. 

We show that these classical results can be viewed as special cases of a contractivity condition for the following \emph{auxiliary (continuous-time) dynamics} associated with~\eqref{eq2:dsys}, weighted by invertible~$Q \in \mathbb{R}^{sn \times sn}$:
\begin{align}\label{eq:asys}
    \dot y (t) = Q^{-1} \bigl( - y(t) + (\1_s \otimes x) + h (A \otimes \I_n) F_c (t, y(t)) \bigr),
\end{align}
where~$y(t) \in \bR^{sn}$ denotes the state.

\begin{thm}\label{thm:wd}
\textbf{\textup{(Well-Definedness Analysis via Auxiliary Dynamics)}}
    Let~$\| \cdot \|$ be a norm on~$\bR^{sn}$ with a compatible weak pairing, and let~$f:\bR \times \bR^n \to \bR^n$ be continuous. 
    Suppose that the auxiliary dynamics~\eqref{eq:asys} with invertible~$Q \in \bR^{sn \times sn}$ is strongly infinitesimally contracting with respect to norm~$\| \cdot \|$.
    Then    
    \begin{enumerate}
    
        \item \label{P1:wd} 
        there exists a unique~$G:\bR \times \bR^n \to \bR^{sn}$ such that
            \begin{align}\label{eq:alg}
                &- G(t, x) + (\1_s \otimes x) 
                \nonumber\\
                &\quad\qquad + h (A \otimes \I_n) F_c (t, G(t, x)) 
                = \0_{sn}
            \end{align}
        for all~$(t, x) \in \bR \times \bR^n$;
        
    
        \item \label{P2:wd} 
        $G (t, x) = \1_s  \otimes x$ if~$f(t, x) = \0_n$ and~$c=\0_s$;
        

        \item \label{P3:wd} 
        the explicit form~\eqref{eq:esys} is unique on~$\bR \times \bR^n$, given by
        \begin{align}\label{eq:g}
            g (t, x) = x + h (b \otimes \I_n)^\top F_c (t, G(t, x));
        \end{align}
        
    
        \item \label{P4:wd} 
        $y \mapsto \hat F_{h,A,c,t} (y) := y - h (A \otimes \I_n) F_c (t,y)$ is bijective on~$\bR^{sn}$ at each~$h \in \bR_{>0}$ and every~$A \in \bR^{s \times s}$,~$c \in \bR^s$ and~$t \in \bR$, and~$g (t, x)$ can be rewritten as
        \begin{subequations}\label{eq2:g}
            \begin{align}
                &g(t,x) 
                = \frac{1}{s} (\1_s \otimes \I_n)^\top y|_{y=\hat F_{h,A,c,t}^{-1}(\1_s \otimes x)}
                \nonumber\\
                &\qquad
                + h ( v \otimes \I_n )^\top F_c (t,y)|_{y=\hat F_{h,A,c,t}^{-1}(\1_s \otimes x)}\\
                &\quad
                v := b - \frac{1}{s} A^\top \1_s.
                \label{eq2:v}
        \end{align}
        \end{subequations}
    \end{enumerate}
\end{thm}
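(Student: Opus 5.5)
The plan is to use the standard fact from contraction theory that a strongly infinitesimally contracting vector field has a unique globally exponentially stable equilibrium~\cite[Theorem 3.9]{FB:26}. A slightly sharper form is also available: if $\olp{\Phi}\le-\lambda<0$ for a continuous $\Phi:\bR^{sn}\to\bR^{sn}$, then $\Phi$ is a bijection, since $\|\Phi(y)-\Phi(z)\|\ge\lambda\|y-z\|$ and the map is proper and open.

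For fixed $(t,x)$, I would view the right-hand side of~\eqref{eq:asys} as a time-invariant vector field in $y$, freezing the time argument of $F_c$ at $t$. The one-sided Lipschitz constant in the statement is a supremum over all times, so the frozen field inherits the bound $\olp{\cdot}\le-\lambda$. Its equilibria are exactly the solutions $y$ of
\[
-y+(\1_s\otimes x)+h(A\otimes\I_n)F_c(t,y)=\0_{sn},
\]
because $Q$ is invertible. Existence and uniqueness of this equilibrium then define $G(t,x)$ pointwise, which gives~\ref{P1:wd}. Uniqueness also follows directly from the contraction inequality: if $y,z$ are two equilibria, then
\[
0=\llbracket \Phi(y)-\Phi(z);y-z\rrbracket\le-\lambda\|y-z\|^2 .
\]

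For~\ref{P2:wd}, I would check directly that $y=\1_s\otimes x$ solves~\eqref{eq:alg}. When $c=\0_s$ and $f(t,x)=\0_n$, every block of $F_c(t,\1_s\otimes x)$ equals $f(t,x)=\0_n$, so the equation holds, and uniqueness from~\ref{P1:wd} gives $G(t,x)=\1_s\otimes x$. Item~\ref{P3:wd} follows immediately: the stage equation~\eqref{eq2:dsys} has the unique solution $y=G(t_k,x_k)$, and substituting it into~\eqref{eq1:dsys} gives~\eqref{eq:g}. The explicit form is unique because $G$ is.

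For~\ref{P4:wd}, I need bijectivity of $\hat F_{h,A,c,t}$. For any right-hand side $w\in\bR^{sn}$, the equation $\hat F(y)=w$ is equivalent to
\[
Q^{-1}\bigl(-y+w+h(A\otimes\I_n)F_c(t,y)\bigr)=\0_{sn}.
\]
Adding the constant $w$ does not change the one-sided Lipschitz constant, so the same contraction argument gives exactly one solution. This shows $\hat F$ is bijective. The phrase ``at each $h$ and every $A$, $c$'' should be read as those fixed in the hypothesis; this mismatch in the wording is the main point where care is needed.

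For the formula for $g$, set $y=\hat F^{-1}(\1_s\otimes x)$, so that $y=\1_s\otimes x+h(A\otimes\I_n)F_c$. Multiplying by $\frac1s(\1_s\otimes\I_n)^\top$ gives
\[
\tfrac1s(\1_s\otimes\I_n)^\top y = x+\tfrac hs(A^\top\1_s\otimes\I_n)^\top F_c ,
\]
using the Kronecker identity $(\1_s^\top\otimes\I_n)(A\otimes\I_n)=(\1_s^\top A)\otimes\I_n$. Solving for $x$ and substituting into~\eqref{eq:g} yields~\eqref{eq2:g} with $v=b-\frac1sA^\top\1_s$. This step is routine algebra.

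The only real obstacle is existence, as opposed to uniqueness, of the equilibrium for a merely continuous field under a general norm and weak pairing. For this I would cite~\cite[Theorem 3.9]{FB:26}. Its proof runs as follows: the flow map over time $\tau$ is a contraction with factor $e^{-\lambda\tau}$, so Banach's fixed point theorem gives a fixed point of the flow, and that fixed point is an equilibrium.
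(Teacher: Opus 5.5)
Your proposal is correct and follows essentially the same route as the paper. Both identify solutions of the stage equation with equilibria of the auxiliary dynamics (using invertibility of $Q$), obtain uniqueness from contractivity via \cite[Theorem 3.9]{FB:26}, and get Properties (ii)--(iv) by substitution and Kronecker-product algebra. Your bijectivity argument in Property (iv) is actually more complete than the paper's: the paper only invokes unique solvability at $x=\0_n$, while your translation-invariance argument covers an arbitrary right-hand side $w\in\bR^{sn}$. Your reading of the quantifiers over $h$, $A$, $c$ as the ones fixed in the hypothesis is also the right one.
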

\begin{pf}
    The proof is in Appendix~\ref{app:thm:wd}.
    \qed
\end{pf}

The auxiliary system~\eqref{eq:asys} enables verification of the well-definedness of an implicit~$s$-stage Runge--Kutta method. Furthermore, the forward Euler discretization of~\eqref{eq:asys} can be used to implement the Runge--Kutta method, i.e., to solve the algebraic equation~\eqref{eq2:dsys} numerically; see Section~\ref{sec:impliment} below for details.

The proof of Property~\ref{P1:wd} in Theorem~\ref{thm:wd} is simple. Nevertheless, we can cover classical results~\cite[Theorem II.7.2]{HE-NSP-WG:93} and~\cite[Theorems IV.14.2 and IV.14.4]{HE-WG:96} as special cases.

\begin{cor}\label{cor1:wd}
\textbf{\textup{(Recovering~\cite[Theorem II.7.2]{HE-NSP-WG:93})}}
    Let~$\| \cdot \|_i$ be a component norm on~$\bR^n$ with a compatible weak pairing, and let~$\| \cdot \| := \|[\begin{matrix} \| \cdot \|_i & \cdots & \| \cdot \|_i \end{matrix}]^\top\|_1$ be the composite norm on~$\bR^{sn}$. Assume that~$f:\bR \times \bR^n \to \bR^n$ is continuous. 
    Then the auxiliary dynamics~\eqref{eq:asys} with~$Q = \I_{sn}$ is strongly infinitesimally contractive with rate~$\lambda \in \bR_{>0}$ with respect to the composit norm~$\| \cdot \|$ if
    \begin{align}\label{eq:HE-NSP-WG:93}
        -1 + h \| A\|_1 \lpi{f} \le -\lambda,
    \end{align}
    where~$\lpi{f}$ denotes the Lipschitz constant of~$f$ with respect the component norm~$\| \cdot \|_i$.
\end{cor}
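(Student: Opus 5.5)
The plan is to bound the one-sided Lipschitz constant of the auxiliary vector field
\[
\Phi(t,y) := -y + (\1_s \otimes x) + h (A \otimes \I_n) F_c(t,y)
\]
(with $Q=\I_{sn}$) with respect to the composite norm $\|\cdot\|$. The basic fact to use is the subadditivity of the one-sided Lipschitz constant with respect to any norm with a compatible weak pairing: $\olp{-\id + G} \le \olp{-\id} + \olp{G} = -1 + \olp{G}$, together with $\olp{G} \le \lp{G}$. The constant term $\1_s\otimes x$ does not depend on $y$, so it drops out of differences. It therefore suffices to show that
\[
\lp{h(A\otimes \I_n)F_c(t,\cdot)} \le h\|A\|_1 \lpi{f}
\]
with respect to $\|\cdot\|$, uniformly in $t$. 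Once this holds, \eqref{eq:HE-NSP-WG:93} gives $\olp{\Phi} \le -1 + h\|A\|_1\lpi{f} \le -\lambda$, which is the claim.

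For the Lipschitz bound, I take $y,z\in\bR^{sn}$ and compute block by block. The $i$-th block of $(A\otimes\I_n)(F_c(t,y)-F_c(t,z))$ is $\sum_j a_{i,j}\bigl(f(t+c_j,y_j)-f(t+c_j,z_j)\bigr)$. By the triangle inequality and the Lipschitz property of $f$ in the component norm, its $\|\cdot\|_i$-norm is at most $\sum_j |a_{i,j}| \lpi{f} \|y_j-z_j\|_i$. Summing over $i$ gives
\[
\sum_j \Bigl(\sum_i |a_{i,j}|\Bigr)\lpi{f}\|y_j-z_j\|_i \le \|A\|_1 \lpi{f} \sum_j \|y_j-z_j\|_i ,
\]
because $\|A\|_1$ is the maximum absolute column sum. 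The right-hand side equals $\|A\|_1\lpi{f}\|y-z\|$ by definition of the composite norm. The time shifts $c_j$ cause no trouble, since $\lpi{f}$ is a supremum over all $t$.

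The main step requiring care is justifying subadditivity and the bound $\olp{G}\le\lp{G}$ for the weak pairing. For the first, I would use that the pairing satisfies $\llbracket -v; v\rrbracket = -\|v\|^2$ and is subadditive in its first argument; both are axioms of compatible weak pairings in \cite{FB:26}. For the second, I would use the Cauchy--Schwarz-type inequality $\llbracket u;v\rrbracket \le \|u\|\|v\|$. If one prefers to avoid the pairing axioms, there is an alternative route. The one-sided Lipschitz constant equals the limit $\lim_{\epsilon\to0^+}(\lp{\id+\epsilon\Phi}-1)/\epsilon$, and
\[
\lp{\id+\epsilon\Phi} \le (1-\epsilon) + \epsilon h\|A\|_1\lpi{f},
\]
which gives the same bound. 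One also needs a compatible weak pairing for the composite norm to exist; the composite $\ell_1$-type norm admits one built from the component pairings, and I would assume this from \cite{FB:26}.
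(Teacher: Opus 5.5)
Your proposal is correct and is essentially the paper's proof. You split off $-1$ from the $-\id_{sn}$ term, bound $\olp{\cdot}$ by $\lp{\cdot}$, and bound the Lipschitz constant of $(A\otimes\I_n)F_c$ in the composite norm by $\|A\|_1\lpi{f}$; the paper factors this last step as $\|A\otimes\I_n\|\,\lp{F_c}$ with $\|A\otimes\I_n\|=\|A\|_1$ and $\lp{F_c}\le\lpi{f}$, and your block-by-block estimate verifies both factors inline.

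One point to tighten: subadditivity and Cauchy--Schwarz alone only give $\llbracket -v;v\rrbracket\ge-\|v\|^2$, which is the wrong direction for your upper bound. The inequality you need should therefore be credited to the translation property of the one-sided Lipschitz constant in \cite[Lemma 3.5]{FB:26}, which is exactly what the paper invokes.
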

\begin{pf}
    The proof is in Appendix~\ref{app:cor1:wd}.
    \qed
\end{pf}

\begin{cor}\label{cor2:wd}
\textbf{\textup{(Recovering~\cite[Theorems IV.14.2 and IV.14.4]{HE-WG:96})}}
    Let~$f:\bR \times \bR^n \to \bR^n$ be continuous, and let~$A \in \bR^{s \times s}$ be invertible. Also, let~$d \in \bR_{>0}^s$, and~$P \succ \0_{n \times n}$. 
    Then the auxiliary dynamics~\eqref{eq:asys} with~$Q = A \otimes \I_n$ is strongly infinitesimally contractive with rate~$\lambda \in \bR_{>0}$ with respect to norm~$\| \cdot \|_{2,([d] \otimes P)^{1/2}}$ if
    \begin{align}\label{eq:HE-WG:96}
        \mu_{2,[d]^{1/2}} ( - A^{-1} ) + h \; \olpP{f} \le -\lambda.
    \end{align}
\end{cor}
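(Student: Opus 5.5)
We need to bound $\olp{\cdot}$ of the vector field $\phi(y) := (A\otimes \I_n)^{-1}\bigl(-y + \1_s\otimes x + h(A\otimes\I_n)F_c(t,y)\bigr) = -(A^{-1}\otimes \I_n)y + (A^{-1}\otimes\I_n)(\1_s\otimes x) + hF_c(t,y)$ with respect to $\|\cdot\|_{2,([d]\otimes P)^{1/2}}$. The key simplification is that choosing $Q = A\otimes \I_n$ cancels the coupling matrix in front of $F_c$. The field therefore splits into a linear part $-(A^{-1}\otimes \I_n)y$, a constant, and the block-diagonal nonlinear part $hF_c$. Since this is a Euclidean (inner-product) norm, the weak pairing is bilinear, so the one-sided Lipschitz constant is subadditive: $\olp{\phi}\le \mu(-A^{-1}\otimes\I_n) + h\,\olp{F_c}$. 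The constant term does not contribute.

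\emph{Linear term.} For the weighted inner product with weight $[d]\otimes P$, we have
\[
\mu_{2,([d]\otimes P)^{1/2}}(M\otimes \I_n)=\lambda_{\max}\bigl(\tfrac12([d]^{1/2}M[d]^{-1/2}+[d]^{-1/2}M^\top[d]^{1/2})\otimes \I_n\bigr)
\]
after the change of variables $z=([d]\otimes P)^{1/2}y$. Here I use the fact that $(\I_s\otimes P^{1/2})$ commutes with $M\otimes \I_n$, so the $P$ weighting drops out. Hence $\mu(-A^{-1}\otimes \I_n)=\mu_{2,[d]^{1/2}}(-A^{-1})$. Equivalently, I will directly compute $\langle\!\langle(-A^{-1}\otimes \I_n)w,w\rangle\!\rangle = w^\top([d]\otimes P)(-A^{-1}\otimes\I_n)w$. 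Writing $w=(\I_s\otimes P^{-1/2})u$, this becomes $u^\top(\mathrm{sym}([d](-A^{-1}))\otimes \I_n)u$, which is bounded by $\mu_{2,[d]^{1/2}}(-A^{-1})\,u^\top([d]\otimes\I_n)u$ by the definition of the weighted log norm. The identity $u^\top([d]\otimes\I_n)u=\|w\|^2$ then closes this step.

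\emph{Nonlinear term.} For $y,\bar y$, we have
\[
\langle\!\langle F_c(t,y)-F_c(t,\bar y), y-\bar y\rrangle_{([d]\otimes P)} = \sum_i d_i \langle\!\langle f(t+c_i,y_i)-f(t+c_i,\bar y_i), y_i-\bar y_i\rrangle_{2,P^{1/2}}.
\]
Each summand is at most $d_i\,\olpP{f}\,\|y_i-\bar y_i\|_{2,P^{1/2}}^2$, because the supremum defining $\olpP{f}$ is over all $t$, which absorbs the shift $c_i$. Since $d_i>0$, summing gives at most $\olpP{f}\,\|y-\bar y\|^2$. This holds for either sign of $\olpP{f}$, because the weights are positive.

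Combining the two terms gives $\olp{\phi}\le \mu_{2,[d]^{1/2}}(-A^{-1})+h\,\olpP{f}\le-\lambda$ by \eqref{eq:HE-WG:96}, which is strong infinitesimal contractivity of \eqref{eq:asys}. The main obstacle I expect is the careful justification of the Kronecker/weighted log-norm identity in the linear step, in particular checking that the $P$ weighting factors out exactly. The rest is direct.
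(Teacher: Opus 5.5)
Your proposal is correct and follows essentially the same route as the paper's proof. Both use subadditivity of the one-sided Lipschitz constant to split the field into the linear part $-(A^{-1}\otimes\I_n)$ and $hF_c$, a Kronecker-structure argument giving $\mu_{2,([d]\otimes P)^{1/2}}\bigl(-(A\otimes\I_n)^{-1}\bigr)\le\mu_{2,[d]^{1/2}}(-A^{-1})$ (your similarity transform by $[d]^{1/2}\otimes P^{1/2}$ even yields equality), and the blockwise bound $\olpDP{F_c}\le\olpP{f}$ using $d_i>0$.
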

\begin{pf}
    The proof is in Appendix~\ref{app:cor2:wd}.
    \qed
\end{pf}

The conditions in~\cite[Theorems IV.14.2 and IV.14.4]{HE-WG:96} are described by using~$\alpha_{[d]} ( A^{-1} )$ that is the largest number~$\alpha \in \bR$ satisfying
\begin{align*}
    \llangle u, A^{-1} u \rrangle_{2,[d]^{1/2}} 
    \ge \alpha \| u \|_{2,[d]^{1/2}}^2, 
    \quad \forall u \in \bR^s.
\end{align*}
In fact, using~\cite[eq.(2.39)]{FB:26}, we can show~$\mu_{2,[d]^{1/2}} ( {-} A^{-1} \bigr) = -\alpha_{[d]} \bigl( A^{-1} \bigr)$ as follows:
\begin{align*}
    \mu_{2,[d]^{1/2}} ( - A^{-1} )
    &= \sup_{u \in \bR^s \setminus \{\0_s \}} 
    \frac{\llangle - A^{-1} u, u \rrangle_{2,[d]^{1/2}}}{\| u \|_{2,[d]^{1/2}}^2}
    \nonumber\\
    &= - \inf_{u \in \bR^s \setminus \{\0_s \}} 
    \frac{\llangle A^{-1} u, u \rrangle_{2,[d]^{1/2}}}{\| u \|_{2,[d]^{1/2}}^2}
    \nonumber\\
    &= - \alpha_{[d]} ( A^{-1} ).
\end{align*}
This provides the interpretation of~$\alpha_{[d]} \bigl( A^{-1} \bigr)$ in terms of the log norm~$\mu_{2,[d]^{1/2}} ( - A^{-1} )$. 


\subsection{Contractivity Preservation}\label{sec:cp}
In this subsection, our goal is to understand when implicit Runge--Kutta methods preserve the strong infinitesimal contractivity of a continuous-time system. For weak infinitesimal contractivity, this property is known as B-stability~\cite[Definition IV.12.2]{HE-WG:96}. Although the contractivity of implicit Runge--Kutta dynamics can be studied directly as done for weak infinitesimal contractivity with respect to the $\ell_2$-norm~\cite[Theorem IV.12.23]{HE-WG:96}, this does not immediately relate with the preservation of contractivity. Consequently, the preservation of weak infinitesimal contractivity with respect to the~$\ell_2$-norm has been studied by a different approach~\cite[Theorem IV.12.4]{HE-WG:96}. We extend this analysis to establish a condition for the preservation of strong infinitesimal contractivity with respect to the~$\ell_2$-norm as follows.

\begin{thm}\label{thm:con2}
\textbf{\textup{(Contractivity Preservation with respect to~$\ell_2$-Norms)}}
    Let~$f:\bR \times \bR^n \to \bR^n$ be continuous, and let~$P \succ \0_{n \times n}$. For a continuous-time system~\eqref{eq:sys} and its~$s$-stage Runge--Kutta dynamics~\eqref{eq:dsys}, assume
    \begin{enumerate}

        \item \label{A1:con2}
        \eqref{eq:dsys} is well-defined;

        
        \item \label{A2:con2}
        $\olpP{f} \le - \lambda_2$ for some~$\lambda_2 \in \bR_{>0}$;


        \item \label{A3:con2}
        $\lpP{f} \le \ell_2$ for some~$\ell_2 \in \bR_{>0}$;

        \item \label{A4:con2}
        $M:= [b] A + A^\top [b] - b b^\top \succeq \0_{s \times s}$ and~$b \ge \0_s$. 

    \end{enumerate}
    Then, the discrete-time system~\eqref{eq:dsys} is strongly contractive with factor~$\rho_2 \in [0, 1)$ with respect to norm~${\|\cdot\|_{2, P^{1/2}}}$:
    \begin{align}\label{eq:rho_con2}
        \rho_2 := \left( 1 - \frac{2 h \lambda_2 \| b\|_1 }{ \|\I_s + h \ell_2 | A | \|_{2, [b]^{1/2}}^2} \right)^{1/2}.
    \end{align}
\end{thm}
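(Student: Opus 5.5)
We follow the classical B-stability argument of~\cite[Theorem IV.12.4]{HE-WG:96}, keeping track of the strictly negative term that the proof of weak contractivity discards. Fix $t_k$ and two initial states $x_k, \tilde x_k$, and let $y,\tilde y$ be the corresponding unique stage solutions, which exist by Assumption~\ref{A1:con2}. Write $\Delta x := x_k - \tilde x_k$, $\Delta x^+ := x_{k+1}-\tilde x_{k+1}$, $\Delta y_i := y_i - \tilde y_i$, and $\Delta f_i := h\bigl(f(t_k+c_i,y_i)-f(t_k+c_i,\tilde y_i)\bigr)$. The update and stage equations give
\begin{align*}
\Delta x^+ &= \Delta x + \sum_{i} b_i \Delta f_i,\\
\Delta y_i &= \Delta x + \sum_j a_{i,j}\Delta f_j .
\end{align*}
We expand $\|\Delta x^+\|_{2,P^{1/2}}^2$ and substitute $\Delta x = \Delta y_i - \sum_j a_{i,j}\Delta f_j$ inside the cross terms. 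This yields the standard identity
\begin{align*}
\|\Delta x^+\|_{2,P^{1/2}}^2 &= \|\Delta x\|_{2,P^{1/2}}^2 + 2\sum_i b_i \llangle \Delta f_i, \Delta y_i\rrangle_{2,P^{1/2}}\\
&\quad - \sum_{i,j} m_{i,j}\llangle \Delta f_i,\Delta f_j\rrangle_{2,P^{1/2}},
\end{align*}
where $m_{i,j}$ are the entries of $M$. By Assumption~\ref{A4:con2}, $M\succeq 0$, so $M\otimes P\succeq 0$ and the last sum is nonnegative; we drop it. Because $b\ge \0_s$, Assumption~\ref{A2:con2} bounds each inner product by $-h\lambda_2\|\Delta y_i\|_{2,P^{1/2}}^2$. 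Hence
\[
\|\Delta x^+\|_{2,P^{1/2}}^2 \le \|\Delta x\|_{2,P^{1/2}}^2 - 2h\lambda_2\sum_i b_i \|\Delta y_i\|_{2,P^{1/2}}^2 .
\]

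It remains to bound $\sum_i b_i\|\Delta y_i\|^2$ from below by a multiple of $\|\Delta x\|^2$. This is the main obstacle, and it is where $\ell_2$ and the $\|\cdot\|_{2,[b]^{1/2}}$ norm enter. Set $\eta_i := \|\Delta y_i\|_{2,P^{1/2}}$ and $\eta := (\eta_1,\dots,\eta_s)^\top$. From $\Delta x = \Delta y_i - \sum_j a_{i,j}\Delta f_j$ and Assumption~\ref{A3:con2}, we get $\|\Delta x\|_{2,P^{1/2}}\le \eta_i + h\ell_2\sum_j |a_{i,j}|\eta_j$ for every $i$. In vector form,
\[
\1_s \|\Delta x\|_{2,P^{1/2}} \le (\I_s + h\ell_2|A|)\eta .
\]
Both sides are nonnegative, so we may take the $\|\cdot\|_{2,[b]^{1/2}}$ norm; monotonicity of this weighted norm on the nonnegative orthant holds because $b\ge\0_s$. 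Since $\|\1_s\|_{2,[b]^{1/2}}^2=\sum_i b_i=\|b\|_1$, this gives
\[
\|b\|_1\,\|\Delta x\|_{2,P^{1/2}}^2 \le \|\I_s+h\ell_2|A|\|_{2,[b]^{1/2}}^2\,\sum_i b_i\eta_i^2 .
\]
If some $b_i=0$, then $\|\cdot\|_{2,[b]^{1/2}}$ is only a seminorm. We handle this either by using the seminorm directly, since the inequality chain remains valid, or by a perturbation $b_i\to b_i+\epsilon$ followed by $\epsilon\to 0$. We expect this degenerate case to be the point needing the most care.

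Substituting the last bound into the inequality from the first step gives $\|\Delta x^+\|_{2,P^{1/2}}^2\le\rho_2^2\|\Delta x\|_{2,P^{1/2}}^2$ with $\rho_2$ as in~\eqref{eq:rho_con2}. To conclude $\rho_2\in[0,1)$, we note the quantity under the square root is nonnegative because it bounds the ratio of squared norms from above when $\Delta x\neq 0$. The degenerate case $\|b\|_1=0$ is trivial and is excluded when we claim strictness; otherwise the subtracted term is positive, so $\rho_2<1$.
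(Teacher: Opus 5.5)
Your proposal is correct and follows the paper's proof essentially step for step. The paper likewise takes the energy inequality $\|x_{k+1}-x_{k+1}'\|_{2,P^{1/2}}^2\le\|x_k-x_k'\|_{2,P^{1/2}}^2-2h\lambda_2\sum_i b_i\|y_i-y_i'\|_{2,P^{1/2}}^2$ from the proof of \cite[Theorem IV.12.4]{HE-WG:96}, which you rederive via the $M$-identity, and combines it with the componentwise bound $\|x_k-x_k'\|_{2,P^{1/2}}\1_s\le(\I_s+h\ell_2|A|)\eta$ measured in the $[b]$-weighted seminorm. On the degenerate case you flag, using the seminorm directly is fine: if $b_i=0$ then $m_{i,i}=0$, so $M\succeq\0_{s\times s}$ forces the $i$th row of $M$ to vanish, i.e., $b_j a_{j,i}=0$ for all $j$; hence $\I_s+h\ell_2|A|$ maps $\ker[b]$ into itself and its induced seminorm is finite.
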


\begin{pf}
    The proof is in Appendix~\ref{app:thm:con2}.~\qed
\end{pf}

From~\eqref{eq:rho_con2}, an~$s$-stage Runge--Kutta dynamics preserves strong infinitesimal contractivity with respect to norm~${\|\cdot\|_{2, P^{1/2}}}$ for any step size~$h \in \bR_{>0}$ if all conditions in Theorem~\ref{thm:con2} hold. Note that any explicit Runge-Kutta method does not satisfy Assumption~\ref{A4:con2} because~$A$ is lower triangular with zeros on the diagonal.

The conditions in Assumption~\ref{A4:con2} of Theorem~\ref{thm:con2} is called algebraic stability~\cite[Definition 12.5]{HE-WG:96}. If we require the strict inequality~$b \in \bR_{>0}^s$, then Assumption~\ref{A4:con2} implies the well-definedness of an~$s$-stage Runge--Kutta method~\cite[Theorem 3.1]{SMN-85}. Thus, Assumption~\ref{A1:con2} can be removed.

We next investigate the preservation of strong infinitesimal contractivity with respect to the~$\ell_1$- and~$\ell_\infty$-norms.

\begin{thm}\label{thm:con1}
\textbf{\textup{(Contractivity Preservation with respect to~$\ell_1$-Norms)}}
    Let~$f:\bR \times \bR^n \to \bR^n$ be continuous, let~$\eta \in \bR_{> 0}^n$, and recall~$v = b - A^\top \1_s/s$ in~\eqref{eq2:v}. For a continuous-time system~\eqref{eq:sys} and its~$s$-stage Runge--Kutta dynamics~\eqref{eq:dsys}, assume
    
    \begin{enumerate}  
        \item \label{A1:con1}
        \eqref{eq:dsys} is well-defined;
        
        
        \item \label{A2:con1}
        $\olpeta{f} \le - \lambda_1$ for some~$\lambda_1 \in \bR_{>0}$;
        

        \item \label{A3:con1}
        $\lpeta{f} \le \ell_1$ for some~$\ell_1 \in \bR_{>0}$;
        

        \item \label{A4:con1}
        $a_{i,i} \ge 0$, $i=1,\dots,s$,~$v \ge \0_s$, and
        \begin{align}\label{eq:con1}
            \lambda_1 v^\top \1_s 
            > \max_{j \in \{1,\dots,s\} } 
            \biggl( - \lambda_1 a_{j,j} + \ell_1 \sum_{i \neq j} |a_{i,j}| \biggr);
        \end{align}
        

        \item \label{A5:con1}
        the step size~$h \in \bR_{>0}$ satisfies
        \begin{align}
            1 &\ge h \lambda_1 v^\top \1_s
            \label{eq1:h_con1}\\
            1 &\ge h s \hat \ell_i v_k, 
            \label{eq2:h_con1}
            \quad \forall i=1,\dots,n, \; \forall k=1,\dots,s,
        \end{align}
        where~$\hat \ell_i$,~$i=1,\dots,n$ denotes a Lipschitz constant of the~$i$th component~$f_i$ with respect to~$x_i$ uniformly in the other variables.
    \end{enumerate}
     Then,~\eqref{eq:dsys} is strongly infinitesimally contractive with factor~$\rho_1 \in [0, 1)$ with respect to norm~$\| \cdot \|_{1, [\eta]}$:
    \begin{align}\label{eq:rho_con1}
        \rho_1 
        := \frac{1- h \lambda_1 v^\top \1_s }{\displaystyle 1 - h \max_{j \in \{1,\dots,s\}} 
        \biggl( - \lambda_1 a_{j,j} + \ell_1 \sum_{i \neq j} |a_{i,j}| \biggr)}.
    \end{align}
\end{thm}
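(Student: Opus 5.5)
We plan to work with the rewritten update map from Theorem~\ref{thm:wd}\ref{P4:wd}. Well-definedness is assumed in~\ref{A1:con1}, so the argument should only need the algebraic identity behind~\eqref{eq2:g} and not the contractivity hypothesis of that theorem. Fix $t$ and two initial states $x,x'$, and let $y,y'$ be the corresponding stage vectors. Set $\Delta x := x-x'$, $\Delta y_j := y_j-y_j'$ and $\Delta f_j := f(t+c_j,y_j)-f(t+c_j,y_j')$, and write $\|\cdot\|$ for $\|\cdot\|_{1,[\eta]}$. Summing the $s$ stage equations~\eqref{eq2:dsys} gives $s\,\Delta x=\sum_i \Delta y_i - h\sum_j (A^\top\1_s)_j \Delta f_j$. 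Substituting this into the update~\eqref{eq1:dsys} gives
\begin{align*}
  g(t,x)-g(t,x') = \frac{1}{s}\sum_{j=1}^s \bigl(\Delta y_j + h s v_j \Delta f_j\bigr).
\end{align*}
The proof then has two steps: an upper bound for the right-hand side in terms of $\sum_j\|\Delta y_j\|$, and an upper bound for $\sum_j\|\Delta y_j\|$ in terms of $\|\Delta x\|$.

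\emph{Step 1 (forward-Euler pieces).} Each summand is a forward Euler step $\id + h s v_j f$ with nonnegative step $h s v_j$, since $v\ge\0_s$. We show that in the weighted $\ell_1$-norm $\lp{\id + \tau f}\le 1-\tau\lambda_1$ whenever $\tau\hat\ell_i\le 1$ for all $i$. This is the classical diagonal-dominance argument for $\ell_1$: the condition $\tau\hat\ell_i\le1$ makes each diagonal entry of $\I_n+\tau Df$ nonnegative. Here $Df$ is a generalized (Clarke) Jacobian, or one can argue directly with difference quotients, since only a Lipschitz constant of $f_i$ in $x_i$ is assumed. With nonnegative diagonal, the induced $\ell_{1,[\eta]}$ norm equals $1+\tau\mu_{1,[\eta]}(Df)\le 1-\tau\lambda_1$. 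The step-size condition~\eqref{eq2:h_con1} is exactly $\tau\hat\ell_i\le1$ for $\tau=hsv_k$, so
\begin{align*}
  \|g(t,x)-g(t,x')\| \le \sum_j \Bigl(\tfrac1s - h v_j\lambda_1\Bigr)\|\Delta y_j\|.
\end{align*}

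\emph{Step 2 (implicit stage bound).} We rewrite row $i$ of the stage equation as $\Delta y_i - h a_{i,i}\Delta f_i = \Delta x + h\sum_{j\neq i} a_{i,j}\Delta f_j$. Because $a_{i,i}\ge0$ and $\olpeta{f}\le-\lambda_1$, the implicit-Euler estimate from the weak pairing gives $\|\Delta y_i - h a_{i,i}\Delta f_i\|\ge(1+h a_{i,i}\lambda_1)\|\Delta y_i\|$. The right-hand side is at most $\|\Delta x\| + h\ell_1\sum_{j\ne i}|a_{i,j}|\,\|\Delta y_j\|$ by~\ref{A3:con1}. Summing over $i$ and exchanging the order of summation yields
\begin{align*}
  \sum_j (1-h\kappa_j)\|\Delta y_j\|\le s\|\Delta x\|,
  \qquad \kappa_j := -\lambda_1 a_{j,j}+\ell_1\sum_{i\ne j}|a_{i,j}|.
\end{align*}
By~\eqref{eq:con1} and~\eqref{eq1:h_con1}, $h\max_j\kappa_j < h\lambda_1 v^\top\1_s\le1$. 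Hence $\sum_j\|\Delta y_j\|\le s\|\Delta x\|/(1-h\max_j\kappa_j)$, and the same inequalities give $\rho_1<1$.

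\emph{Main obstacle: combining the two bounds.} The weights from Step 1 are $\frac1s - hv_j\lambda_1$, which vary with $j$. A naive estimate would replace them by their maximum, which produces $1 - h s\lambda_1 \min_j v_j$ rather than the numerator $1-h\lambda_1 v^\top\1_s$ of~\eqref{eq:rho_con1}. To obtain the aggregate $v^\top\1_s$, we will first try to avoid splitting the sum in Step 1. Instead, we would take the weak pairing of the whole vector $\frac1s\sum_j(\Delta y_j+hsv_j\Delta f_j)$ with the single sign pattern of $\frac1s\sum_j\Delta y_j$, and treat $\sum_j v_j\Delta f_j$ as a convex-type combination with total weight $v^\top\1_s$. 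If that fails, we would instead weight the Step-2 inequality so that the stage bound is used in the form $\sum_j(\frac1s - hv_j\lambda_1)\|\Delta y_j\|\le(\frac1s - h\lambda_1 v^\top\1_s/s)\sum_j\|\Delta y_j\|$ plus slack absorbed by the $\kappa_j$ terms. We expect this reconciliation between the per-stage weights and the aggregate $v^\top\1_s$ to be the delicate part of the argument; everything else is a routine application of the standard forward- and backward-Euler estimates in $\ell_1$.
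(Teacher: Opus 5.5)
Your Steps~1 and~2 are correct, and they are the two ingredients of the paper's appendix proof. Step~1 is \eqref{pf4:con1}. Step~2 is a row-by-row version of \eqref{pf3:con1} that needs no Jacobians, does not invoke Property~\ref{P4:wd} of Theorem~\ref{thm:wd}, and keeps the individual weights $1-h\kappa_j$. The genuine gap is the combination step you flag, and neither remedy can close it, because the statement as written is false for $s\ge 2$. Take $n=1$, $\eta=1$, $f(t,x)=-x$, $s=2$, $A=\begin{bmatrix}1&0\\-1&0\end{bmatrix}$, $b=(1/2,0)^\top$, $h=1$.
Then $A^\top\1_2=\0_2$, so $v=b\ge\0_2$. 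With $\lambda_1=\ell_1=\hat\ell_1=1$ you get $\kappa_1=-1+1=0$ and $\kappa_2=0$. So \eqref{eq:con1} reads $1/2>0$, \eqref{eq1:h_con1} reads $1\ge 1/2$, and \eqref{eq2:h_con1} reads $1\ge 1$; moreover $a_{1,1},a_{2,2}\ge 0$ and the stages are uniquely solvable. All hypotheses hold, and $\rho_1=1/2$.
Yet $y_1=x_k/2$, $y_2=x_k+y_1=3x_k/2$ and $x_{k+1}=x_k-\tfrac12 y_1=\tfrac34 x_k$, so $\lpeta{g}=3/4>\rho_1$. The stage increments concentrate on stage~2, where $v_2=0$, and all $\kappa_j$ vanish, so there is no slack to absorb anything. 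Your proposed inequality $\sum_j(\frac1s-hv_j\lambda_1)\|\Delta y_j\|\le(\frac1s-h\lambda_1 v^\top\1_s/s)\sum_j\|\Delta y_j\|$ reads $\frac34\le\frac12$ here, in units of $|\Delta x|$. No sign-pattern argument can prove a false bound either.

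The paper's proof has the same hole. It passes from \eqref{eq2:g}, \eqref{pf3:con1} and \eqref{pf4:con1} to $\rho_1$ without justification, which amounts to replacing the weights $\frac1s-h\lambda_1 v_k$ by their average; a valid combination must use their maximum. What your two steps do prove follows from $\sum_j w_j u_j\le \max_j\frac{w_j}{1-h\kappa_j}\sum_j (1-h\kappa_j)u_j$, with $u_j=\|\Delta y_j\|$, $1-h\kappa_j>0$, and $w_j=\frac1s-hv_j\lambda_1\ge 0$ (nonnegative because $\hat\ell_i\ge\lambda_1$). The result is
\[
  \lpeta{g}\le\max_{j\in\{1,\dots,s\}}\frac{1-hs\lambda_1 v_j}{1-h\kappa_j},
\]
which is $<1$ exactly when $s\lambda_1 v_j>\kappa_j$ for every $j$. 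That is a stagewise strengthening of \eqref{eq:con1}. This bound coincides with $\rho_1$ whenever all $v_j$ are equal, in particular for $s=1$, so the paper's implicit midpoint and implicit Euler examples are unaffected. The right repair is to amend the conclusion or hypothesis in this way, not to search for a cleverer combination of your two steps.
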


\begin{pf}
    The proof is in Appendix~\ref{app:thm:con1}.~\qed
\end{pf}

\begin{thm}\label{thm:coni}
\textbf{\textup{(Contractivity Preservation with respect to~$\ell_\infty$-Norms)}}
    Let~$f:\bR \times \bR^n \to \bR^n$ be continuous, let~$\eta \in \bR_{> 0}^n$, and recall~$v = b - A^\top \1_s/s$ in~\eqref{eq2:v}. For a continuous-time system~\eqref{eq:sys} and its~$s$-stage Runge--Kutta dynamics~\eqref{eq:dsys}, assume
    
    \begin{enumerate}
        \item \label{A1:coni}
        \eqref{eq:dsys} is well-defined;
        
        
        \item \label{A2:coni}
        $\olpetai{f} \le - \lambda_\infty$ for some~$\lambda_\infty \in \bR_{>0}$;
        

        \item \label{A3:coni}
        $\lpetai{f} \le \ell_\infty$ for some~$\ell_\infty \in \bR_{>0}$;
        

        \item \label{A4:coni}
        $a_{i,i} \ge 0$, $i=1,\dots,s$,~$v \ge \0_s$, and
        \begin{align*}
            \lambda_\infty v^\top \1_s 
            > \max_{i \in \{1,\dots,s\}} 
            \biggl( - \lambda_\infty a_{i,i} + \ell_\infty \sum_{j \neq i} |a_{j,i}| \biggr);
        \end{align*}
        

        \item \label{A5:coni}
        step size~$h \in \bR_{>0}$ satisfies~\eqref{eq1:h_con1} and~\eqref{eq2:h_con1}.
    \end{enumerate}
     Then,~\eqref{eq:dsys} is strongly infinitesimally contractive with factor~$\rho_\infty \in [0, 1)$ with respect to norm~$\| \cdot \|_{\infty, [\eta]^{-1}}$:
    \begin{align*}
        \rho_\infty 
        := \frac{1- h \lambda_\infty v^\top \1_s }{\displaystyle 1 - h \max_{i \in \{1,\dots,s\}} 
        \biggl( - \lambda_\infty a_{i,i} + \ell_\infty \sum_{j \neq i} |a_{i,j}| \biggr)}.
    \end{align*}
\end{thm}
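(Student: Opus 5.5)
The plan is to mirror the proof of Theorem~\ref{thm:con1}, working coordinatewise with the $\ell_\infty$ weak pairing. Throughout, the roles of column sums of $|A|$ are replaced by row sums, and the $\ell_1$ estimates are swapped for the corresponding $\ell_\infty$ ones. Take two initial states $x,\bar x$ with stage solutions $y=\hat F_{h,A,c,t}^{-1}(\1_s\otimes x)$ and $\bar y=\hat F_{h,A,c,t}^{-1}(\1_s\otimes \bar x)$, which exist by Assumption~\ref{A1:coni} and Theorem~\ref{thm:wd}. Write $\Delta y_k := y_k-\bar y_k$ and $\Delta F_k := f(t+c_k,y_k)-f(t+c_k,\bar y_k)$. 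By the representation~\eqref{eq2:g},
\begin{align*}
g(t,x)-g(t,\bar x) = \frac{1}{s}\sum_{k=1}^s \Delta y_k + h\sum_{k=1}^s v_k \Delta F_k .
\end{align*}
The first step is to bound the right-hand side in $\|\cdot\|_{\infty,[\eta]^{-1}}$ by a \emph{stage-wise} quantity. For each component $i$ and stage $k$, the combination $\frac{1}{s}\Delta y_{k,i} + h v_k \Delta F_{k,i}$ is a forward-Euler-type step in coordinate $i$. Because of the step-size condition~\eqref{eq2:h_con1}, together with $v\ge \0_s$ and the component Lipschitz constants $\hat\ell_i$, the map $y_{k,i}\mapsto \frac{1}{s}y_{k,i}+hv_kf_i$ is nondecreasing in its own variable. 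This lets me use the one-sided bound $\olpetai{f}\le-\lambda_\infty$ in its entrywise form (Metzler-type diagonal dominance, as in the $\ell_1$ proof). The result should be
\begin{align*}
\|g(t,x)-g(t,\bar x)\|_{\infty,[\eta]^{-1}} \le (1-h\lambda_\infty v^\top\1_s)\max_k \|\Delta y_k\|_{\infty,[\eta]^{-1}},
\end{align*}
where~\eqref{eq1:h_con1} guarantees that the prefactor is nonnegative.

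The second step bounds $\max_k\|\Delta y_k\|$ by $\|x-\bar x\|$ using the stage equation
\begin{align*}
\Delta y_i - h a_{i,i}\Delta F_i = (x-\bar x) + h\sum_{j\ne i} a_{i,j}\Delta F_j .
\end{align*}
Take $i$ to be the stage attaining the maximum of $\|\Delta y_i\|_{\infty,[\eta]^{-1}}$. Since $a_{i,i}\ge 0$, the map $\id-ha_{i,i}f$ is expansive with factor at least $1+ha_{i,i}\lambda_\infty$; this is the standard implicit-Euler bound from strong contractivity, obtained via the weak pairing and the sub-additivity properties. On the right-hand side, each term $h\sum_{j\ne i}|a_{i,j}|\ell_\infty\|\Delta y_j\|$ is bounded by $\ell_\infty\|\Delta y_j\|\le \ell_\infty\|\Delta y_i\|$. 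Combining these gives
\begin{align*}
\Bigl(1-h\max_i\bigl(-\lambda_\infty a_{i,i}+\ell_\infty\textstyle\sum_{j\ne i}|a_{i,j}|\bigr)\Bigr)\max_k\|\Delta y_k\|\le \|x-\bar x\|,
\end{align*}
where the row sum appears naturally because we maximize over the stage index. Dividing yields $\rho_\infty$.

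The final step is to check that $\rho_\infty<1$. Here I expect a small mismatch, since the hypothesis in~\ref{A4:coni} is stated with column sums $\sum_{j\neq i}|a_{j,i}|$ while $\rho_\infty$ uses row sums. I would follow the statement of $\rho_\infty$, derive positivity of the denominator from $1-h(\cdot)\ge 1-h\lambda_\infty v^\top\1_s\ge0$, and compare numerator with denominator using~\eqref{eq:con1} in its row-sum form. If needed, I would note that the intended condition uses the same index pattern as $\rho_\infty$.

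I expect the main obstacle to be the first step. The difficulty is justifying the coordinatewise bound with the $\ell_\infty$ weak pairing, which is only attained at the maximizing index $I_\infty$, simultaneously across the $s$ stage terms. The approach I would try first is to pick the component $i\in I_\infty$ of the output difference and write the map $(y_{1,i},\dots,y_{s,i})\mapsto$ output as a monotone function along a segment. Integrating the Jacobian along that segment (assuming differentiability almost everywhere, via Rademacher) then converts $\olpetai{f}$ into the diagonal-dominance inequality $\partial_if_i+\sum_{m\ne i}|\partial_mf_i|\eta_m/\eta_i\le-\lambda_\infty$, and summing over $k$ with weights $v_k\ge0$ produces the factor $1-h\lambda_\infty v^\top\1_s$.
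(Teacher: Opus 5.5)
Your proof is correct. It follows the paper's skeleton, since the paper's proof is the $\ell_\infty$ transcription of the proof of Theorem~\ref{thm:con1}:
\begin{enumerate}
\item the representation \eqref{eq2:g};
\item a stage-wise forward-Euler bound that gives the numerator $1-h\lambda_\infty v^\top\1_s$ (your first step is exactly the analogue of \eqref{pf4:con1});
\item a bound on the stage differences that gives the denominator.
\end{enumerate}

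The genuine difference is in your second step. The paper bounds the Lipschitz constant of $\hat F_{h,A,c,t}^{-1}$ on the stacked space. It writes $\mathsf{osLip}(-\hat F_{h,A,c,t}) = h\,\mathsf{osLip}((A\otimes\I_n)F_c)-1$ with respect to $\|\cdot\|_{\infty,[\1_s\otimes\eta]^{-1}}$. It then bounds the latter by the log norm of the block matrix with blocks $a_{i,j}D_{y_j}f$, which is where the row sums come from, and applies \cite[Corollary~18]{AD-SJ-AVP-FB:24}. You instead pick the worst stage. You combine the implicit-Euler expansivity $\|\Delta y_i-ha_{i,i}\Delta F_i\|\ge(1+ha_{i,i}\lambda_\infty)\|\Delta y_i\|$, which follows from sub-additivity of the weak pairing and $a_{i,i}\ge 0$, with plain Lipschitz bounds on the couplings $j\neq i$.

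Your route is more elementary. It needs no Jacobians, and it uses only the two stage solutions supplied by \ref{A1:coni} rather than global invertibility of $\hat F_{h,A,c,t}$. It also never uses the $\ell_\infty$ structure inside a stage. The paper's route is more systematic and exhibits the row-sum pattern as a block log norm. Moreover, its one-sided estimate is exactly strong contractivity of \eqref{eq:asys} with $Q=\I_{sn}$, so it certifies well-definedness via Theorem~\ref{thm:wd} as a by-product.

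The index mismatch you flagged is real. Both routes produce $\sum_{j\neq i}|a_{i,j}|$, so the inequality in \ref{A4:coni} should use $|a_{i,j}|$. With column sums as literally written, neither positivity of the denominator nor $\rho_\infty<1$ follows. With the row-sum version, the strict inequality gives $1-h\max_i(\cdot)>1-h\lambda_\infty v^\top\1_s\ge 0$.

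Two small remarks on your first step.
\begin{enumerate}
\item The simultaneity across stages that you worry about does not arise, because the triangle inequality over $k$ decouples the stages. What you need is the per-stage bound for every component $i$, not only for $i\in I_\infty$. Summing these bounds against $\max_k\|\Delta y_k\|$ also requires $1/s-h\lambda_\infty v_k\ge 0$. This follows from \eqref{eq2:h_con1} because $\lambda_\infty\le\hat\ell_i$ (test $\olpetai{f}$ on perturbations along the $i$th coordinate).
\item For non-smooth $f$, integrating the Jacobian along a segment is delicate, since the segment may lie in the null set where $f$ is not differentiable. A derivative-free alternative works as follows. Replace the $i$th coordinate of $y_k$ by $\bar y_{k,i}\pm\eta_i\|\Delta y_k\|_{\infty,[\eta]^{-1}}$; this makes $i$ a maximizing index, so $\olpetai{f}\le-\lambda_\infty$ applies directly there. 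Then control the remaining change by the monotonicity of $u\mapsto u/s+hv_kf_i$ in its own variable, which is what \eqref{eq2:h_con1} provides.
\end{enumerate}
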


\begin{pf}
    The proof is similar to that of Theorem~\ref{thm:con1}.~\qed
\end{pf}


\subsection{Implementation of Implicit Methods}\label{sec:impliment}
To implement an~$s$-stage implicit Runge--Kutta method, we need to solve the implicit stage equation~\eqref{eq2:dsys} with respect to~$y$. As shown in Property~\ref{P1:wd} of Theorem~\ref{thm:wd}, the algebraic equation has a unique solution~$y^*=g(t_k,x_k)$ under the strong infinitesimal contractivity of the auxiliary continuous-time dynamics~\eqref{eq:asys}. Moreover,~$y^*=g(t_k,x_k)$ is the equilibrium of the auxiliary dynamics. If we can compute~$y^*=g(t_k,x_k)$, then the~$s$-stage Runge--Kutta method can be implemented.

Since the forward Euler-discretization preserves an equilibrium of a continuous-time dynamics, we can use the forward Euler-discretization of the auxiliary dynamics~\eqref{eq:asys} for computing~$y^*=g(t_k,x_k)$. According to~\cite[Lemma 19]{AD-SJ-AVP-FB:24}, a Lipschitz constant of the forward Euler-discretization can be estimated by using contractivity rate and Lipschitz constant of a continuous-time dynamics and step size~$h \in \bR_{>0}$ for discretization. Applying this estimation to the auxiliary dynamics~\eqref{eq:asys}, its forward Euler-discretization is strongly infinitesimally contracting with factor~$\rho \in [0,1)$ if step size~$h$ is selected sufficiently small as to satisfy
\begin{align}\label{eq:step}
    \rho := \e^{-h \lambda} + \e^{h \ell} - 1 - h \ell < 1,
\end{align}
where~$\lambda$ and~$\ell$ denote a one-sided Lipschitz constant and Lipschitz constant of~$Q^{-1}(- \id_{sn} + h (A \otimes \I_n) F_c)$, respectively. Therefore, if the auxiliary dynamics~\eqref{eq:asys} is strongly infinitesimally contracting and has a finite Lipschitz constant, then its forward Euler-discretization with step size~$h$ satisfying~\eqref{eq:step} can be used to implement an~$s$-stage Runge--Kutta method.



\subsection{Numerical Examples}

\subsubsection{Implicit Midpoint Method}
As an example of implicit Runge--Kutta methods, we consider the implicit midpoint method whose coefficients are~$a_{1,1} =  1/2$ and~$b_1 = 1$~\cite[p.205]{HE-NSP-WG:93}. A one-sided Lipschitz constant of the auxiliary dynamics~\eqref{eq:asys} with~$Q=\I_n$ is~$-1 + (h/2) \olp {f}$. According to Theorem~\ref{thm:wd}, the implicit midpoint dynamics is well-defined for any~$h \in \bR_{>0}$ such that~$-1 + (h/2) \olp {f} < 0$. In particular, if the original continuous-time dynamics is strongly infinitesimally contracting, then the implicit midpoint dynamics is well-defined for any~$h \in \bR_{>0}$.

Next, we show contractivity preservation. For the~$\ell_2$-norm, Assumptions~\ref{A1:con2} and~\ref{A4:con2} in Theorem~\ref{thm:con2} hold. If the original continuous-time dynamics is strongly infinitesimally contracting with rate~$\lambda_2 \in \bR_{>0}$ and has a finite Lipschitz constant~$\ell_2 \in \bR_{>0}$ with respect to norm~${\|\cdot\|_{2, P^{1/2}}}$, then from~\eqref{eq:rho_con2}, a contractivity factor of the implicit midpoint dynamics with respect to norm~${\|\cdot\|_{2, P^{1/2}}}$ is
\begin{align*}
    \rho_2 = \left( 1 - \frac{2 h \lambda_2}{(1 + h \ell_2/2)^2} \right)^{1/2} < 1.
\end{align*}
This is minimized as~$\rho_2 = (1 - \lambda_2/\ell_2)^{1/2}$ by selecting~$h = 2/\ell_2$.

For the~$\ell_1$-norm, Assumption~\ref{A1:con1} in Theorem~\ref{thm:con2} holds, where~$v=1/2$. Also, if the original continuous-time dynamics is strongly infinitesimally contracting with rate~$\lambda_1 \in \bR_{>0}$ and has a finite Lipschitz constant~$\ell_1 \in \bR_{>0}$ with respect to norm~$\| \cdot \|_{1, [\eta]}$, then Assumptions~\ref{A2:con1} -- \ref{A4:con1} hold. Finally, since~$\ell_1 \ge \lambda_1$ and~$\ell_1 \ge \hat \ell_i$,~$i=1,\dots,n$, Assumption~\ref{A5:con1} holds if~$1 \ge h \ell_1/2$. In this case, a contractivity factor of the implicit midpoint dynamics with respect to norm~$\| \cdot \|_{1, [\eta]}$ is
\begin{align*}
    \rho_1 = \frac{1- h \lambda_1/2}{1 + h \lambda_1/2} < 1.
\end{align*}
This is a decreasing function of~$h$. From~$1 \ge h \ell_1/2$, $h=2/\ell_1$ gives the smallest contractivity factor~$\rho_1 = \frac{1- \lambda_1/\ell_1}{1 + \lambda_1/\ell_1}$. A similar estimation is obtained for the~$\ell_\infty$-norm.


\subsubsection{Implicit Euler Method}
For the implicit Euler method, the coefficients are~$a_{1,1} = b_1 = 1$. A one-sided Lipschitz constant of the auxiliary dynamics~\eqref{eq:asys} with~$Q=\I_n$ is~$-1 + h \olp {f}$. Thus, for a strongly infinitesimally contracting continuous-time system, the implicit Euler dynamics is well-defined for any~$h \in \bR_{>0}$. 

Next, we show contractivity preservation. Suppose that the original continuous-time dynamics is strongly infinitesimally contracting with rate~$\lambda_2 \in \bR_{>0}$ and has a finite Lipschitz constant~$\ell_2 \in \bR_{>0}$ with respect to norm~${\|\cdot\|_{2, P^{1/2}}}$. Then, all assumptions in Theorem~\ref{thm:con2} hold, and a contractivity factor of the implicit Euler dynamics  with respect to norm~${\|\cdot\|_{2, P^{1/2}}}$ is
\begin{align*}
    \rho_2 = \left( 1 - \frac{2 h \lambda_2}{ (1 + h \ell_2 )^2} \right)^{1/2} < 1.
\end{align*}
This is minimized as~$\rho_2 = ( 1 - \lambda_2/2\ell_2 )^{1/2}$ by selecting~$h=1/\ell_2$.

Suppose that the original continuous-time dynamics is strongly infinitesimally contracting with rate~$\lambda_1 \in \bR_{>0}$ and has a finite Lipschitz constant~$\ell_1 \in \bR_{>0}$ with respect to norm~$\| \cdot \|_{1, [\eta]}$. Since~$v=0$, all assumptions in Theorem~\ref{thm:con1} hold, and a contractivity factor of the implicit Euler dynamics  with respect to norm~$\| \cdot \|_{1, [\eta]}$ is 
\begin{align*}
    \rho_1 = \frac{1}{1 + h \lambda_1}.
\end{align*}
This can be made arbitrary small by selecting~$h \in \bR_{>0}$ sufficiently large. A similar estimation is obtained for the~$\ell_\infty$-norm.



\section{Conclusion}\label{sec:con}
In this paper, we have revisited classical problems of multi-stage Runge--Kutta methods using tools originally developed for their analysis and further refined through contraction theory. We have first estimated Lipschitz constants of explicit multi-stage Runge--Kutta dynamics. We have next shown that implicit multi-stage Runge--Kutta dynamics are well-defined when the corresponding auxiliary dynamics are strongly infinitesimally contractive, which covers classical well-definedness conditions as special cases. Moreover, the auxiliary dynamics have provided a means to implement multi-stage Runge--Kutta methods without directly solving the associated implicit equations. We have then studied conditions under which multi-stage Runge--Kutta methods preserve strong infinitesimal contractivity of the original continuous-time dynamics with respect to the~$\ell_2$-, $\ell_1$-, and~$\ell_\infty$-norms. Our condition for the~$\ell_2$-norm has been based on the classical analysis for preservation of weak infinitesimal contractivity, while such an analysis has not been conducted for the~$\ell_1$- and~$\ell_\infty$-norms before.



\begin{ack}                               
This work of Y. Kawano is supported in part by JST FOREST Program Grant Number JPMJFR222E and JSPS KAKENHI Grant Number JP21H04875. This work of F. Bullo is supported in part by AFOSR grant FA9550-22-1-0059. 
\end{ack}



\appendix


\section{Proofs of Results in Section~\ref{sec:wd}}
\subsection{Proof of Theorem~\ref{thm:wd}}\label{app:thm:wd}
    (Proof of~\ref{P1:wd})
    At each~$(t, x) \in \bR \times \bR^n$, there is a one-to-one correspondence between a solution to the algebraic equation~\eqref{eq2:dsys} and an equilibrium of the auxiliary dynamics~\eqref{eq:asys}, where recall that~$Q$ is invertible. The contractivity of~\eqref{eq:asys} implies the uniqueness of its equilibrium~$y^* = G(t, x)$ at each~$(t, x) \in \bR \times \bR^n$~\cite[Theorem 3.9]{FB:26}.


    (Proof of~\ref{P2:wd})
    If~$f(t, x) = \0_n$ and~$c = \0_s$, then~$y = \1_s  \otimes x$ satisfies~\eqref{eq2:dsys}. From Property~\ref{P1:wd}, i.e., the uniqueness of~$G(t,x)$, we have~$G(t, x) = \1_s  \otimes x$.
    

    (Proof of~\ref{P3:wd})
    From Property~\ref{P1:wd},~$y = G(t,x)$ is a unique solution to~\eqref{eq2:dsys}. Substituting~$y = G(t,x)$ into~\eqref{eq1:dsys} yields~\eqref{eq:g}. 
    

    (Proof of~\ref{P4:wd})
    The contractivity of~\eqref{eq:asys} implies that~\eqref{eq2:dsys} has a unique solution when~$x=\0_n$. Thus,~$\hat F_{h,A,c,t}(y) = y - h (A \otimes \I_n) F_c (t,y)$ is bijective. 

    The bijectivity of~$\hat F_{h,A,c,t}(y)$ implies that~\eqref{eq2:dsys} can be rewritten as~$\1_s \otimes x_k = \hat F_{h,A,c,t}(y)$, i.e.,~$y = \hat F_{h,A,c,t}^{-1} (\1_s \otimes x_k)$. Thus,~\eqref{eq:dsys} is equivalent to
        \begin{align}\label{pf1:P3:wd}
            x_{k+1} 
            = x_k + h (b \otimes \I_n)^\top F_c ( t_k, y )|_{y = \hat F_{h,A,c,t}^{-1} (\1_s \otimes x_k)}.
        \end{align}
    Also, using the transpose and mixed product properties of the Kronecker product~\cite[E4.6]{FB:26},~\eqref{eq2:dsys}, and~$\hat F_{h,A,c,t}^{-1}(y) = y - h (A \otimes \I_n) F_c (t,y)$, we can rewrite~$x_k$ as follows
        \begin{align*}
            x_k & = \frac{1}{s} (\1_s \otimes \I_n)^\top (\1_s \otimes x_k) \\
            &= \frac{1}{s} (\1_s \otimes \I_n)^\top \hat F_{h,A,c,t_k}(y)|_{y = \hat F_{h,A,c,t_k}^{-1} (\1_s \otimes x_k)} \\
            &= \frac{1}{s} (\1_s \otimes \I_n)^\top y|_{y = \hat F_{h,A,c,t_k}^{-1} (\1_s \otimes x_k)}
            \nonumber\\
            &\quad 
            - \frac{h}{s} (\1_s \otimes \I_n)^\top (A \otimes \I_n) F_c (t_k,y)|_{y = \hat F_{h,A,c,t_k}^{-1} (\1_s \otimes x_k)}.
        \end{align*}
    Substituting this into~\eqref{pf1:P3:wd} yields
        \begin{align*}
            x_{k+1} 
            &=  \frac{1}{s} (\1_s \otimes \I_n)^\top y|_{y = \hat F_{h,A,c,t_k}^{-1} (\1_s \otimes x_k)}
            \nonumber\\
            &\quad + h \left( (b \otimes \I_n)^\top- \frac{1}{s} (\1_s \otimes \I_n)^\top (A \otimes \I_n) \right)
            &\nonumber\\
            &\qquad \times F_c (t_k,y)|_{y = \hat F_{h,A,c,t_k}^{-1} (\1_s \otimes x_k)}.
        \end{align*}
    From the the transpose and bilinearity properties of the Kronecker product~\cite[E4.6]{FB:26},~\eqref{eq:esys} holds for~$g(t,x)$ in~\eqref{eq2:g}.~\qed


\subsection{Proof of Corollary~\ref{cor1:wd}}\label{app:cor1:wd}
    From~the translation and positive homogeneity properties of the one-sided Lipschitz constant~\cite[Lemma 3.5]{FB:26}, we have
    \begin{align*}
        &\olp{- \id_{sn} + (\1_s \otimes x) + h (A \otimes \I_n) F_c} 
        \nonumber\\
        &\quad =  -1 +  h \; \olp{(A \otimes \I_n) F_c}.
    \end{align*}
    From the relation between Lipschitz and one-sided Lipschitz constants~\cite[eq. (3.15)]{FB:26} and~\cite[Definition 3.2]{FB:26}, we obtain
    \begin{align*}
        \olp{(A \otimes \I_n) F_c}
        &\le \lp{(A \otimes \I_n) F_c}
        \\
        &\le  \| A \otimes \I_n \| \lp{F_c}.
    \end{align*}
    From~$\| \cdot \|=\|[\begin{matrix} \| \cdot \|_i & \cdots & \| \cdot \|_i \end{matrix}]^\top\|_1$, it follows that
    \begin{align*}
        \| A \otimes \I_n \| = \| A \|_1,
        \quad
        \lp{F_c} \le \lpi{f}.
    \end{align*}
    Combining these all lead to
    \begin{align*}
        &\olp{- \id_{sn} + (\1_s \otimes x) + h (A \otimes \I_n) F_c} 
        \nonumber\\
        &\quad \le -1 +  h \| A \|_1 \lpi{f}.
    \end{align*}
    Thus,~\eqref{eq:HE-NSP-WG:93} implies the strong infinitesimal contractivity of the auxiliary dynamics~\eqref{eq:asys} with~$Q = \I_{sn}$ with rate~$\lambda \in \bR_{>0}$.~\qed


\subsection{Proof of Corollary~\ref{cor2:wd}}\label{app:cor2:wd}
    From the subadditivity and positive homogeneity of the one-sided Lipschitz constant~\cite[Lemma 3.5]{FB:26}, and the relation between one-sided Lipschitz constant and log norm~\cite[Lemma 3.4]{FB:26}, we have
    \begin{align}\label{pf1:HE-WG:96}
        &\olp{(A \otimes \I_n)^{-1}(- \id_{sn} + (\1_s \otimes x) + h (A \otimes \I_n) F_c)} 
        \nonumber\\
        &\quad \le \mu (- (A \otimes \I_n)^{-1} ) + h \; \olp{F_c}.
    \end{align}
    We estimate an upper bound on each term of the right-hand side by selecting~$\| \cdot \| = \| \cdot \|_{2,([d] \otimes P)^{1/2}}$.

    Using the inverse, mixed product, transpose, and bilinearity properties of the Kronecker product~\cite[E4.6]{FB:26}, compute
    \begin{align}\label{pf2:HE-WG:96}
        &([d] \otimes P) \bigl( - ( A \otimes \I_n)^{-1} \bigr) 
        + \bigl( - ( A \otimes \I_n)^{-1} \bigr)^\top ([d] \otimes P)
        \nonumber\\
        &\quad =([d] \otimes P) \bigl( (- A^{-1}) \otimes \I_n \bigr)
        \nonumber\\
        &\qquad 
        + \bigl( (- A^{-1}) \otimes \I_n \bigr)^\top ([d] \otimes P)
        \nonumber\\
        &\quad = ([d] (- A^{-1})) \otimes P + ((- A^{-1})^\top [d]) \otimes P
        \nonumber\\
        &\quad = \bigl( [d]  (- A^{-1}) + (- A^{-1})^\top [d]  \bigr) \otimes P.
    \end{align}
    On the other hand, from the definition of~$\mu_{2,[d]^{1/2}} (- A^{-1})$, it follows that
    \begin{align*}
        [d]  (- A^{-1}) + (- A^{-1})^\top [d]
        \preceq \mu_{2,[d]^{1/2}} (- A^{-1}) [d].
    \end{align*}
    From the eigenvalue property of the Kronecker product~\cite[E4.6]{FB:26} and~$P \succ \0_{n \times n}$, we have
    \begin{align*}
        &\bigl( [d]  (- A^{-1}) + (- A^{-1})^\top [d] \bigr) \otimes P
        \nonumber\\
        &\quad \preceq (\mu_{2,[d]^{1/2}} (- A^{-1}) [d]) \otimes P
        \nonumber\\
        &\quad = \mu_{2,[d]^{1/2}} (- A^{-1}) ([d] \otimes P).
    \end{align*}
    This and~\eqref{pf2:HE-WG:96} yield
    \begin{align*}
        &([d] \otimes P) \bigl( - ( A \otimes \I_n)^{-1} \bigr) 
        + \bigl( - ( A \otimes \I_n)^{-1} \bigr)^\top ([d] \otimes P)
        \nonumber\\
        &\quad \preceq \mu_{2,[d]^{1/2}} (- A^{-1}) ([d] \otimes P).
    \end{align*}
    From the definition of~$\mu_{2,([d] \otimes P)^{1/2}} ( - ( A \otimes \I_n)^{-1} )$, we have
    \begin{align}\label{pf3:HE-WG:96}
        \mu_{2,([d] \otimes P)^{1/2}} \left( - ( A \otimes \I_n)^{-1} \right)
        \le \mu_{2,[d]^{1/2}} (- A^{-1}).
    \end{align}

    Next, since~$[d] \succ \0_{s \times s}$ is diagonal, it follows from~\eqref{eq3:dsys} that
    \begin{align*}
        &\llangle F_c(t,y) - F_c(t,y'), y- y' \rrangle_{2, ([d] \otimes P)^{1/2}} \\
        &\quad = \sum_{i=1} d_i \llangle f(t+c_i, y_i) - f(t+c_i, y'_i), y_i - y'_i \rrangle_{2, P^{1/2}} \\
        &\quad \le \olpP{f} \sum_{i=1} d_i \| y_i - y'_i \|_{2, P^{1/2}}^2 \\
        &\quad  = \olpP{f} \| y - y' \|_{2, ([d] \otimes P)^{1/2}}^2.
    \end{align*}
    From the definition of~$\olpDP{F_c}$, we obtain
    \begin{align}\label{pf4:HE-WG:96}
        \olpDP{F_c} \le \olpP{f}.
    \end{align}

    From~\eqref{pf1:HE-WG:96},~\eqref{pf3:HE-WG:96}, and~\eqref{pf4:HE-WG:96} with~$\| \cdot \| = \| \cdot \|_{2,([d] \otimes P)^{1/2}}$, we have
    \begin{align*}
        &\olp{(A \otimes \I_n)^{-1}(- \id_{sn} + (\1_s \otimes x) + h (A \otimes \I_n) F_c)} 
        \nonumber\\
        &\quad \le \mu_{2,[d]^{1/2}} (- A^{-1}) + h \; \olpP{f}
    \end{align*}
    Thus,~\eqref{eq:HE-WG:96} implies the strong infinitesimal contractivity of the auxiliary dynamics~\eqref{eq:asys} with~$Q = A \otimes \I_n$ with rate~$\lambda$.~\qed



\section{Proofs of Results in Section~\ref{sec:cp}}
\subsection{Proof of Theorem~\ref{thm:con2}}\label{app:thm:con2}
    From~\eqref{eq2:dsys} and~\eqref{eq3:dsys}, i.e.,
    \begin{align*}
        y_i = x_k + h \sum_{j=1}^s a_{i,j} f (t_k + c_j, y_j),
    \end{align*}
    the triangular inequality of the vector norm, and Assumptions~\ref{A3:con2}, we have
    \begin{align*}
        &\|y_i - y'_i \|_{2, P^{1/2}} \\
        &\quad \ge \|x_k - x_k'\|_{2, P^{1/2}} \\
        &\qquad 
        - h \left\| \sum_{j=1}^s a_{i,j} (f (t_k+c_j, y_j) - f (t_k+c_j, y'_j)) \right\|_{2, P^{1/2}} \\
        &\quad \ge \|x_k - x_k'\|_{2, P^{1/2}} - h \ell_2 \sum_{j=1}^s |a_{i,j}| \|y_j - y'_j\|_{2, P^{1/2}},
    \end{align*}
    i.e.,
    \begin{align*}
        \|x_k - x_k'\|_{2, P^{1/2}} \1_s \le ( \I_s + h \ell_2 | A | ) 
        \begin{bmatrix}
        \|y_1 - y'_1\|_{2, P^{1/2}} \\ \vdots \\ \|y_s - y'_s\|_{2, P^{1/2}}
        \end{bmatrix}.
    \end{align*}
    Taking~$\| \cdot \|_{2, [b]^{1/2}}$ (that is not necessary a norm) preserves the order of the inequality because of~$[b] \succeq \0_{s\times s}$~in Assumption~\ref{A4:con2}. Thus, it follows that
    \begin{align*}
        &\| b\|_1^{1/2} \|x_k - x_k'\|_{2, P^{1/2}} 
        \\
        &\quad \le 
        \|\I_s + h \ell_2 | A | \|_{2, [b]^{1/2}}
        \left\|\begin{bmatrix}
        \|y_1 - y'_1\|_{2, P^{1/2}} \\ \vdots \\ \|y_s - y'_s\|_{2, P^{1/2}}
        \end{bmatrix} \right\|_{2, [b]^{1/2}}.
    \end{align*}
    Squaring both sides yields
    \begin{align}\label{pf1:con2}
        &\frac{\| b\|_1 \|x_k - x_k'\|_{2, P^{1/2}}^2}{\|\I_s + h \ell_2 | A | \|_{2, [b]^{1/2}}^2} 
        \le 
        \left\|\begin{bmatrix}
        \|y_1 - y'_1\|_{2, P^{1/2}} \\ \vdots \\ \|y_s - y'_s\|_{2, P^{1/2}}
        \end{bmatrix} \right\|_{2, [b]^{1/2}}^2.
    \end{align}
    
    Next, according to the proof of~\cite[Theorem IV.12.4]{HE-WG:96}, under Assumptions~\ref{A2:con2} and~\ref{A4:con2}, the~$s$-stage Runge--Kutta dynamics~\eqref{eq:dsys} satisfies
    \begin{align}\label{pf2:con2}
        &\| x_{k+1} - x'_{k+1}\|_{2, P^{1/2}}^2
        \nonumber\\
        &\quad \le  \| x_k - x'_k \|_{2, P^{1/2}}^2
        - 2 h \lambda_2
        \left\|\begin{bmatrix}
        \|y_1 - y'_1\|_{2, P^{1/2}} \\ \vdots \\ \|y_s - y'_s\|_{2, P^{1/2}}
        \end{bmatrix} \right\|_{2, [b]^{1/2}}^2.
    \end{align}
    From~\eqref{pf1:con2} and~\eqref{pf2:con2},~\eqref{eq:dsys} is strongly infinitesimally contracting with factor~$\rho_2$ in~\eqref{eq:rho_con2}.~\qed


\subsection{Proof of Theorem~\ref{thm:con1}}\label{app:thm:con1}
From Property~\ref{P4:wd} of Theorem~\ref{thm:wd},~$g(t,x)$ can be described as in~\eqref{eq2:g} under Assumption~\ref{A1:con1}. We compute an upper bound on~$\lpeta{g}$. To this end, we first estimate an upper bound on~$\lpetaa{\hat F_{h,A,c,t}^{-1}}$. Using~$\hat F_{h,A,c,t} (y) = - h (A \otimes \I_n) F_c (t,y) + y$ in Property~\ref{P4:wd} of Theorem~\ref{thm:wd}, the definition of one-sided Lipschitz maps~\cite[Definition 3.2]{FB:26}, sub-additivity of a weak paring~\cite[Definition 2.27]{FB:26}, \cite[E2.27]{FB:26}, and the definition of the sign paring~\cite[Definition 2.29]{FB:26}, compute
    \begin{align}\label{pf1:con1}
        &\olpetaa{-\hat F_{h,A,c,t}} 
        \nonumber\\
        &\quad = \sup_{y\neq y'}\frac{\llbracket - \hat F_{h,A,c,t} (y) + \hat F_{h,A,c,t} (y'), y - y' \rrbracket_{1,[\1_s \otimes \eta]}}
        {\| y - y'\|_{1,[\1_s \otimes \eta]}^2}
        \nonumber\\
        &\quad = h \sup_{y\neq y'}\frac{\llbracket  (A \otimes \I_n) (F_c (t,y) - F_c (t,y')), y - y' \rrbracket_{1,[\1_s \otimes \eta]}}
        {\| y - y'\|_{1,[\1_s \otimes \eta]}^2}
        \nonumber\\
        &\qquad + \sup_{y\neq y'}\frac{\llbracket - (y - y'), y - y' \rrbracket_{1,[\1_s \otimes \eta]}}
        {\| y - y'\|_{1,[\1_s \otimes \eta]}^2}
        \nonumber\\
        &\quad = h \; \olpetaa{(A \otimes \I_n) F_c} - 1.
    \end{align}
    From Assumptions~\ref{A2:con1}~and~\ref{A3:con1},~$a_{i,i} \ge 0$,~$i=1,\dots,s$ in Assumption~\ref{A4:con1} and
    \begin{align*}
        &(A \otimes \I_n) F_c(t,y)
        \nonumber\\
        &\quad =\begin{bmatrix}
        a_{1,1} D_{y_1} f(t+c_1, y_1) & \cdots & a_{1,s} D_{y_s} f(t+c_s, y_s) \\
        \vdots & \ddots & \vdots \\
        a_{s,1} D_{y_1} f(t+c_1, y_1) & \cdots & a_{s,s} D_{y_s} f(t+c_s, y_s) \\
    \end{bmatrix},
    \end{align*}
    we obtain
    \begin{align}\label{pf2:con1}
        &\olpetaa{(A \otimes \I_n) F_c}
        \nonumber\\
        &\quad \le \max_{j \in \{1,\dots,s\}} 
        \biggl( - \lambda_1 a_{j,j} + \ell_1 \sum_{i \neq j} |a_{i,j}| \biggr).
    \end{align}
    Combining~\eqref{pf1:con1} and~\eqref{pf2:con1} leads to
    \begin{align*}
        &-\olpetaa{-\hat F_{h,A,c,t}}
        \nonumber\\
        &\quad \ge 
        1 - h \max_{j \in \{1,\dots,s\}} 
        \biggl( - \lambda_1 a_{j,j} + \ell_1 \sum_{i \neq j} |a_{i,j}| \biggr).
    \end{align*}
    Also,~\eqref{eq:con1} in Assumption~\ref{A4:con1} and~\eqref{eq1:h_con1} in Assumption~\ref{A5:con1} imply
    \begin{align*}
        1 - h \max_{j \in \{1,\dots,s\}} 
        \biggl( - \lambda_1 a_{j,j} + \ell_1 \sum_{i \neq j} |a_{i,j}| \biggr) > 0.
    \end{align*}    
    Therefore, from~\cite[Corollary 18]{AD-SJ-AVP-FB:24}, we have
    \begin{align}\label{pf3:con1}
        &\lpetaa{\hat F_{h,A,c,t}^{-1}} 
        \nonumber\\
        &\quad \le \frac{1}{\displaystyle 1 - h \max_{j \in \{1,\dots,s\}} 
        \biggl( - \lambda_1 a_{j,j} + \ell_1 \sum_{i \neq j} |a_{i,j}| \biggr)}.
    \end{align}

    Next, from Assumption~\ref{A2:con1},~$v \ge \0_s$ in Assumption~\ref{A4:con1}, and~\eqref{eq2:h_con1} in Assumption~\ref{A5:con1}, it follows that
    \begin{align*}
        &\eta^\top \left| \frac{1}{s} (y - y')  + h v_k (f(t+c_k, y_k) - f(t+c_k, y_k')) \right|
        \nonumber\\
        &\quad \le \left(\frac{1}{s} - h \lambda_1 v_k\right) \eta^\top |y - y'|,
        \quad k = 1,\dots,s,
    \end{align*}
    and consequently, from~\eqref{eq3:dsys},
    \begin{align}\label{pf4:con1}
        &\eta^\top \biggl| \frac{1}{s} (\1_s \otimes \I_n)^\top (y - y') 
        \nonumber\\
        &\qquad
        + h ( v \otimes \I_n )^\top (F_c (t,y) - F_c (t,y')) \biggr|
        \nonumber\\
        &\quad  \le \left(\frac{1}{s} \1_s - h \lambda_1 v \right)^\top \otimes \eta^\top | y - y'|.
    \end{align}

    Finally, from~\eqref{eq2:g},~\eqref{pf3:con1}, and~\eqref{pf4:con1}, we have~$\lpeta{g} \le \rho_1$ for~$\rho_1$ in~\eqref{eq:rho_con1}. Moreover,~$\rho_1 \in [0, 1)$ if~\eqref{eq:con1} in Assumption~\ref{A4:con1} holds.~\qed




\bibliographystyle{plain}        
\bibliography{autosam}           

\end{document}